\documentclass[letterpaper, 10 pt, journal, twoside]{ieeetran}

\IEEEoverridecommandlockouts                              %

\usepackage{cite}
\usepackage{amsmath,amssymb,amsfonts,amsthm}
\usepackage{bm, bbm, hyperref}
\usepackage{nicefrac}
\usepackage{xcolor}

\newtheorem{assumption}{Assumption}
\newtheorem{corollary}{Corollary}

\newtheorem{lemma}{Lemma}
\newtheorem{proposition}{Proposition}
\newtheorem{remark}{Remark}
\newtheorem{theorem}{Theorem}

\usepackage[framemethod=tikz]{mdframed}

\usepackage{enumitem}

\newcommand{\norm}[1]{\left\lVert#1\right\rVert}

\title{\LARGE \bf
The Price of Distributional Robustness in Linear Quadratic Control
}

\author{Andrea Martin and Giuseppe Belgioioso
\thanks{A. Martin and G. Belgioioso are with the School of Electrical Engineering and Computer Science, and Digital Futures, KTH Royal Institute of Technology, Sweden. E-mail addresses: \{andrmar, giubel\}@kth.se.}
\thanks{This work was supported by Digital Futures and the Wallenberg AI, Autonomous Systems and Software Program (WASP) funded by the Knut and Alice Wallenberg Foundation.}
}

\begin{document}

\maketitle

\thispagestyle{empty}
\pagestyle{empty}

\begin{abstract}
Distributionally robust (DR) optimization seeks decisions that perform best under the most adverse law within a given ambiguity set, enabling the design of data-driven controllers with strong out-of-sample guarantees in the face of uncertainty. In this paper, we study the conservatism introduced by safeguarding against distributional ambiguity. Specifically, we consider the data-driven Wasserstein DR linear quadratic control problem, and we analyze the suboptimality of the corresponding solution relative to the oracle controller computed with foreknowledge of the underlying unknown uncertainty distribution. We present a sample complexity bound that characterizes the number of samples required to ensure that the true cost of the DR solution exceeds that of the oracle controller by at most a user-defined tolerance factor. Our analysis reveals that the suboptimality of the DR solution increases at most linearly with the Wasserstein radius for sufficiently small distributional ambiguity, and at most quadratically away from this local regime. Numerical simulations validate our bounds on the price of distributional robustness. 
\end{abstract}

\section{Introduction}
Modern intelligent systems, such as microgrid controllers and autonomous vehicles, must make dependable decisions to ensure reliable operation in the face of uncertainty. In many applications, however, the probability distribution governing the uncertain problem parameters is itself uncertain and only indirectly observable through samples. This is the case, for instance, when renewable generation forecast errors and demand fluctuations are inferred from historical data~\cite{aolaritei2027hedging}, or when the future behavior of nearby traffic agents is estimated from limited observations of merging, braking, and lane-changing maneuvers~\cite{schuurmans2023safe}.

In these scenarios, classical stochastic optimal control methods do not readily apply, as they generally require full knowledge of the underlying uncertainty distribution. In addition, common workarounds, such as replacing the unknown true distribution with a nominal estimate, often fail to provide satisfactory performance, since the optimization process can amplify estimation errors in the input model—a phenomenon referred to in the literature as the optimizer’s curse~\cite{smith2006optimizer}. To address this challenge, the emerging paradigm of distributionally robust (DR) control aims to design policies that perform best under the most averse law within a given family of distributions. Different approaches have been proposed to construct this ambiguity set, for instance by leveraging prior information about finitely many moments of the true distribution~\cite{van2015distributionally}, or by bounding the dissimilarity from a nominal law using $\phi$-divergences~\cite{petersen2002minimax, falconi2025distributionally}, optimal transport discrepancies~\cite{yang2020wasserstein, taskesen2023distributionally, kim2023distributional, kargin2024infinite, brouillon2025distributionally, fiechtner2026distributionally}, or combinations thereof~\cite{cescon2025data, tacskesen2025optimality, cescon2026sinkhorn}.

Among these different choices, optimal transport ambiguity sets—and Wasserstein balls in particular—have recently received considerable attention due to their favorable computational and statistical properties. They naturally propagate through dynamical transformation~\cite{aolaritei2025distributional} and often yield DR optimization problems that admit computationally tractable strong dual reformulations~\cite{kuhn2019wasserstein}. In addition, in data-driven settings where the nominal distribution is empirical, Wasserstein DR control policies enjoy rigorous out-of-sample guarantees~\cite{fournier2015rate}. Specifically, under a light-tailed assumption on the true law and with a judicious choice of the Wasserstein radius, measure concentration results ensure that the Wasserstein ball centered at the empirical distribution contains the true law with high probability~\cite{fournier2015rate, kuhn2019wasserstein}. Hence, the DR objective of any admissible control policy provides an upper confidence bound on the corresponding true cost under the unknown data-generating distribution.

These finite-sample guarantees make DR optimization a principled framework for data-driven control under uncertainty. At the same time, however, they leave open a fundamental question: how suboptimal is the DR policy relative to the oracle controller designed with foreknowledge of the true distribution? Quantifying this gap is key to understanding the underlying design tradeoffs, namely, whether distributional robustness comes at a negligible or significant loss in performance, and in which regimes using additional samples can yield meaningful performance improvements relative to the added computational effort required to solve larger DR optimal control problem instances. 

In this paper, we study the price of distributional robustness in data-driven Wasserstein DR linear quadratic control and provide quantitative answers to these questions. In particular, we first characterize the growth rate of the suboptimality of the DR solution and prove that the performance loss relative to the oracle controller converges to zero approximately as a linear function of the Wasserstein radius. Building on this characterization, we then derive explicit bounds on the Wasserstein radius and the number of samples required to ensure that the excess cost of the DR solution remains below a user-defined tolerance factor with high probability. Our analysis combines concentration bounds for empirical measures~\cite{fournier2015rate} and subgaussian random variables~\cite{Vershynin_2026} with a novel outer approximation of the Wasserstein ambiguity set of zero mean distributions in a neighborhood of the centered empirical law, constructed by ignoring higher-order moment information. Last, we discuss how the considered data-driven Wasserstein DR optimal control problem can be solved via convex programming, using duality to restrict the adversary's choice to zero mean laws, and present numerical experiments validating our theoretical findings.

More broadly, our analysis is also related to recent suboptimality and sample-complexity results for learning linear quadratic regulators from data~\cite{dean2020sample, zheng2021sample, furieri2022near}. While our derivations share some technical tools with this line of work, the source of uncertainty is fundamentally different: these works study on the effect of parametric model uncertainty on the closed-loop system, whereas our focus is on distributional ambiguity.

\section{Problem Statement}
We consider a discrete-time linear time-varying dynamical system described by the state-space equations
\begin{equation}
    \label{eq:system_dynamics}
    x_{t+1} = A_t x_t + B_t u_t + w_t\,, \quad y_t = C_tx_t + v_t\,, %
\end{equation}
where $x_t \in \mathbb{R}^n$ is the system state, $u_t \in \mathbb{R}^m$ is the control input, $y_t \in \mathbb{R}^p$ is the measurable output, and $w_t \in \mathbb{R}^n$ and $v_t \in \mathbb{R}^p$ denote stochastic process and measurement disturbances, respectively. We study the evolution of~\eqref{eq:system_dynamics} over a finite-time control horizon of length $T \in \mathbb{N}$, and collect all exogenous random variables in the vector $\bm{\xi} = (\mathbf{w}, \mathbf{v}) \in \mathbb{R}^d$ for compactness, where $\mathbf{w} = (x_0, w_0, \dots, w_{T-2})$ and $\mathbf{v} = (v_0, \dots, v_{T-1})$. We denote the true distribution of $\bm{\xi}$ by $\mathbb{P}^\star$, and only assume partial information about $\mathbb{P}^\star$. Specifically, let $\mathcal{P} = \{\mathbb{P} : \mathbb{E}_{\mathbb{P}}[\bm{\xi}] = 0\}$ denote the set of all zero mean probability distributions on $\mathbb{R}^{d}$. We then assume that $\mathbb{P}^{\star} \in \mathcal{P}$ has nondegenerate covariance $\bm{\Sigma}^\star \succ 0$, consistently with~\cite{taskesen2023distributionally}, and that only a finite dataset $\mathcal{D} = \{\bm{\xi}^{1}, \dots, \bm{\xi}^{N}\}$ of $N \in \mathbb{N}$ samples $\bm{\xi}^k \overset{\text{iid}}{\sim} \mathbb{P}^\star$ is available for control design.

Our objective is to characterize the price of distributional robustness by establishing a bound on the number of samples $N$ needed to ensure that the cost of a DR controller constructed from $\mathcal{D}$ exceeds the cost of the $\mathbb{P}^\star$-optimal controller by at most a tolerance factor $\epsilon \in \mathbb{R}_{> 0}$. Toward formalizing this objective, we first restrict our attention to linear feedback policies of the form $\mathbf{u} = \mathbf{K}\mathbf{y}$, where $\mathbf{u} = (u_0, \dots, u_{T-1})$, $\mathbf{y} = (y_0, \dots, y_{T-1})$, and the dynamic controller $\mathbf{K}$ is given by a lower block-triangular matrix due to causality. Then, for any $\mathbb{P} \in \mathcal{P}$, we introduce the standard performance criterion
\begin{equation}
\label{eq:finite_horizon_cost_K_P}
    J(\mathbf{K}, \mathbb{P}) = \mathbb{E}_{\mathbb{P}} \left[\mathbf{x}^\top \mathbf{Q} \mathbf{x} + \mathbf{u}^\top \mathbf{R} \mathbf{u}\right]\,, %
\end{equation}
where $\mathbf{x} = (x_0, \dots, x_{T-1})$, $\mathbf{Q} \succeq 0$, and $\mathbf{R} \succ 0$. We define the $\mathbb{P}^\star$-optimal controller $\mathbf{K}^{\star}$ as
\begin{equation}
\label{eq:ground_truth_optimal_K}
    \mathbf{K}^{\star} = \arg \min_{\mathbf{K} \in \mathcal{\mathcal{K}}} ~ J(\mathbf{K}, \mathbb{P}^\star)\,,
\end{equation}
where $\mathcal{K}$ represents the set of all feedback matrices complying with the lower block-triangular causality sparsity pattern. 

As $\mathbb{P}^\star$ is only indirectly observable from the samples $\bm{\xi}^k \in \mathcal{D}$, computing the $\mathbb{P}^\star$-optimal controller $\mathbf{K}^\star$ in~\eqref{eq:ground_truth_optimal_K} is in general not possible. We therefore adopt a data-driven DR approach and design a controller that performs best under the most averse distribution that lies sufficiently close to the empirical law $\hat{\mathbb{P}}$ constructed from the samples $\bm{\xi}^k \in \mathcal{D}$. Specifically, as $\mathbb{P}^\star \in \mathcal{P}$ is zero mean by assumption, we let
\begin{equation}
\label{eq:empirical_distribution}
    \hat{\mathbb{P}} = \frac{1}{N}\sum_{k = 1}^N ~ \delta(\bm{\xi}^k - \bar{\bm{\xi}})\,,
\end{equation}
where $\delta(\bm{\xi})$ is the Dirac delta distribution at the point $\bm{\xi} \in \mathbb{R}^{d}$ and $\bar{\bm{\xi}} = \frac{1}{N} \sum_{k=1}^N \bm{\xi}_k$ denotes the empirical mean. To mitigate the optimizer's curse, we then robustify our decision against all distributions $\mathbb{P}$ whose Wasserstein distance $\mathbb{W}(\hat{\mathbb{P}}, \mathbb{P})$ from $\hat{\mathbb{P}}$ is at most $\rho \in \mathbb{R}_{\geq 0}$. Formally, let 
\begin{equation*}
    \mathbb{W}(\hat{\mathbb{P}}, \mathbb{P}) = \left(
    \inf_{\pi \in \Pi(\hat{\mathbb{P}}, \mathbb{P})} ~ \int_{\mathbb{R}^d \times \mathbb{R}^d} \|\hat{\bm{\xi}} - \bm{\xi}\|_2^2 ~ \pi(\text{d}\hat{\bm{\xi}}, \text{d}\bm{\xi}) \right)^{\frac{1}{2}}\,,
\end{equation*}
where $\Pi(\hat{\mathbb{P}}, \mathbb{P})$ denotes the set of all joint distributions of $\hat{\bm{\xi}}$ and $\bm{\xi}$ with marginal distributions $\hat{\mathbb{P}}$ and $\mathbb{P}$, respectively. With this notation at hand, we construct the ambiguity set 
\begin{equation}
\label{eq:wasserstein_ambiguity_set}
    \mathbb{B}_{\mathbb{W}}^{\rho}(\hat{\mathbb{P}}) = \{\mathbb{P} \in \mathcal{P} : \mathbb{W}(\hat{\mathbb{P}}, \mathbb{P}) \leq \rho\}\,,
\end{equation}
and formulate the Wasserstein DR optimal control problem as
\begin{equation}
\label{eq:dr_optimal_K}
    \mathbf{K}^{\star}_{\text{dr}} := \arg \min_{\mathbf{K} \in \mathcal{K}} ~ J_{\text{dr}}(\mathbf{K}) = \arg \min_{\mathbf{K} \in \mathcal{K}} ~ \sup_{\mathbb{P \in \mathbb{B}_{\mathbb{W}}^{\rho}(\hat{\mathbb{P}})}} ~ J(\mathbf{K}, \mathbb{P})\,.
\end{equation}

We remark that, as we only include zero mean distributions $\mathbb{P} \in \mathcal{P}$ in the Wasserstein ambiguity set $\mathbb{B}_{\mathbb{W}}^{\rho}(\hat{\mathbb{P}})$ as per~\eqref{eq:wasserstein_ambiguity_set}, using a centered empirical distribution as per~\eqref{eq:empirical_distribution} is crucial to ensure that $\mathbb{B}_{\mathbb{W}}^{\rho}(\hat{\mathbb{P}})$ is non-empty and that the optimal control problem~\eqref{eq:dr_optimal_K} is thus well-posed for every $\rho \in \mathbb{R}_{\geq 0}$.

We are now ready to state our main research questions. 

\begin{mdframed}[hidealllines=true,backgroundcolor=black!5]
\begin{enumerate}[leftmargin=*]
    \item How suboptimal is the cost $J(\mathbf{K}^{\star}_{\text{dr}}, \mathbb{P}^\star)$ of the DR solution in~\eqref{eq:dr_optimal_K} relative to the true cost $J(\mathbf{K}^{\star}, \mathbb{P}^\star)$ of the oracle controller $\mathbf{K}^\star$ in~\eqref{eq:ground_truth_optimal_K}?

    \item How should one tune $\rho$ in~\eqref{eq:wasserstein_ambiguity_set} and how many samples should one collect to guarantee that $J(\mathbf{K}^{\star}_{\text{dr}}, \mathbb{P}^\star)$ is at most a factor $\epsilon$ away from $J(\mathbf{K}^{\star}, \mathbb{P}^\star)$?
\end{enumerate}
\end{mdframed}

Below, we present a suboptimality and sample complexity analysis to provide quantitative answers to these two questions.

\section{Main Results}
We now present our main results. In Section~\ref{subsec:suboptimality_sample_complexity_analysis}, we first construct an outer approximation of the Wasserstein ambiguity set $\mathbb{B}_{\mathbb{W}}^{\rho}(\hat{\mathbb{P}})$ in~\eqref{eq:wasserstein_ambiguity_set} by ignoring any higher-order moment information, and, for a given policy $\mathbf{K}$, we bound the DR cost $J_{\text{dr}}(\mathbf{K})$ in~\eqref{eq:dr_optimal_K} in terms of the average cost $J(\mathbf{K}, \hat{\mathbb{P}})$ incurred by the \emph{same} policy on the nominal distribution $\hat{\mathbb{P}}$ in~\eqref{eq:empirical_distribution}. Then, leveraging standard concentration results for empirical laws~\cite{fournier2015rate} and subgaussian random variables~\cite{Vershynin_2026}, we answer questions 1) and 2) above by showing that the price of distributional robustness admits a global quadratic upper bound in $\rho$, and that, for sufficiently small $\rho$, the number of samples $N$ required to meet an $\epsilon$ suboptimality factor scales with $\frac{1}{\epsilon^{d}}$. Last, in Section~\ref{subsec:implementation}, we discuss how the DR optimal control problem~\eqref{eq:dr_optimal_K} can be solved using semidefinite programming, leveraging duality theory to restrict the adversary's choice to zero mean laws in the inner supremum in~\eqref{eq:dr_optimal_K}.

\subsection{Suboptimality and sample complexity analysis}
\label{subsec:suboptimality_sample_complexity_analysis}
We begin our analysis by introducing an equivalent characterization of the quadratic performance objective~\eqref{eq:finite_horizon_cost_K_P} in terms of the closed-loop system responses induced by a feedback controller $\mathbf{K} \in \mathcal{K}$. To this end, we first compactly rewrite the dynamics~\eqref{eq:system_dynamics} over a control horizon of length $T \in \mathbb{N}$ as
\begin{equation}
\label{eq:system_dynamics_compact}
    \mathbf{x} = \mathbf{Z} \mathbf{A} \mathbf{x} + \mathbf{Z} \mathbf{B} \mathbf{u} + \mathbf{w}\,, \quad \mathbf{y} = \mathbf{C} \mathbf{x} + \mathbf{v}\,,
\end{equation}
where $\mathbf{Z}$ denotes the block-downshift operator, namely, a matrix with identity matrices along its first block sub-diagonal and zeros elsewhere, $\mathbf{A} = \operatorname{blkdiag}(A_0, \dots, A_{T-1})$, $\mathbf{B} = \operatorname{blkdiag}(B_0, \dots, B_{T-1})$, and $\mathbf{C} = \operatorname{blkdiag}(C_0, \dots, C_{T-1})$. Then, we observe that the feedback interconnection of~\eqref{eq:system_dynamics_compact} with a linear policy $\mathbf{u} = \mathbf{K} \mathbf{y}$ yields the relations
\begin{subequations}
\label{eq:sls_closed_loop_state_input_trajectories}
    \begin{align}
        \mathbf{x} &= \bm{\Phi}_{xx} \mathbf{w} + \bm{\Phi}_{xy} \mathbf{v} = \begin{bmatrix}
            \bm{\Phi}_{xx} & \bm{\Phi}_{xy}
        \end{bmatrix} \bm{\xi} = \bm{\Phi}_x \bm{\xi}\,,\\
        \mathbf{u} &= \bm{\Phi}_{ux} \mathbf{w} + \bm{\Phi}_{uy} \mathbf{v} = \begin{bmatrix}
            \bm{\Phi}_{ux} & \bm{\Phi}_{uy}
        \end{bmatrix} \bm{\xi} = \bm{\Phi}_u \bm{\xi}\,,
    \end{align}
\end{subequations}
where the closed-loop maps $\bm{\Phi}_{xx}$, $\bm{\Phi}_{xy}$, $\bm{\Phi}_{ux}$, and $\bm{\Phi}_{uy}$ above are defined as%
\begin{alignat}{3}
    \label{eq:sls_closed_loop_responses_definition}
    \bm{\Phi}_{xx} &= (\mathbf{I} - \mathbf{Z}(\mathbf{A} + \mathbf{B} \mathbf{K} \mathbf{C}))^{-1}\,, ~
    &&\bm{\Phi}_{xy} = \bm{\Phi}_{xx} \mathbf{Z} \mathbf{B} \mathbf{K}\\
    \bm{\Phi}_{ux} &= \mathbf{K} \mathbf{C} \bm{\Phi}_{xx}\,, 
    &&\bm{\Phi}_{uy} = \mathbf{K} + \mathbf{K} \mathbf{C} \bm{\Phi}_{xx} \mathbf{Z} \mathbf{B} \mathbf{K}\,.\nonumber
\end{alignat}
In particular, we remark that there exists $\mathbf{K} \in \mathcal{K}$ such that~\eqref{eq:sls_closed_loop_state_input_trajectories} holds if and only if the closed-loop maps $\bm{\Phi}_{xx}$, $\bm{\Phi}_{xy}$, $\bm{\Phi}_{ux}$, and $\bm{\Phi}_{uy}$ are causal and lie in the affine subspace defined by~\cite{wang2019system}
\begin{subequations}
\label{eq:sls_achievability}
    \begin{align}
        \begin{bmatrix}
            \mathbf{I} - \mathbf{Z}\mathbf{A} & -\mathbf{Z} \mathbf{B}
        \end{bmatrix}
        \begin{bmatrix}
            \bm{\Phi}_{xx} & \bm{\Phi}_{xy}\\ \bm{\Phi}_{ux} & \bm{\Phi}_{uy}
        \end{bmatrix} &=
        \begin{bmatrix}
            \mathbf{I} & \mathbf{0}
        \end{bmatrix}\,,\\
        \begin{bmatrix}
            \bm{\Phi}_{xx} & \bm{\Phi}_{xy}\\ \bm{\Phi}_{ux} & \bm{\Phi}_{uy}
        \end{bmatrix} 
        \begin{bmatrix}
            \mathbf{I} - \mathbf{Z}\mathbf{A} \\ - \mathbf{C}
        \end{bmatrix} &=
        \begin{bmatrix}
            \mathbf{I} \\ \mathbf{0}
        \end{bmatrix}\,;
    \end{align}
\end{subequations}
accordingly, we say that a set of closed-loop maps $\bm{\Phi}_{xx}$, $\bm{\Phi}_{xy}$, $\bm{\Phi}_{ux}$, and $\bm{\Phi}_{uy}$ satisfying~\eqref{eq:sls_achievability} are achievable. In Section~\ref{subsec:implementation}, we will leverage this parametrization of dynamic controllers $\mathbf{K} \in \mathcal{K}$ to derive an equivalent reformulation of~\eqref{eq:dr_optimal_K} as a convex optimization problem. With this notation in place, we can now rewrite the objective~\eqref{eq:finite_horizon_cost_K_P} as a weighted squared Frobenius norm of the closed-loop responses following~\cite{wang2019system}.
\begin{proposition}
\label{prop:finite_horizon_cost_K_P_Frobenius_norm}
    For any $\mathbf{K} \in \mathcal{K}$ and any distribution $\mathbb{P} \in \mathcal{P}$ with covariance matrix $\bm{\Sigma}$, it holds that
    \begin{equation}
    \label{eq:finite_horizon_cost_K_P_Frobenius_norm}
        J(\mathbf{K}, \mathbb{P}) = \norm{
        \begin{bmatrix}
            \mathbf{Q}^{\frac{1}{2}} & 0\\ 0 & \mathbf{R}^{\frac{1}{2}}
        \end{bmatrix}
        \begin{bmatrix}
            \bm{\Phi}_{xx} & \bm{\Phi}_{xy}\\ \bm{\Phi}_{ux} & \bm{\Phi}_{uy}
        \end{bmatrix} 
        \bm{\Sigma}^{\frac{1}{2}}}_{F}^2\,,
    \end{equation}
    where $\bm{\Phi}_{xx}$, $\bm{\Phi}_{xy}$, $\bm{\Phi}_{ux}$, and $\bm{\Phi}_{uy}$ are defined according to~\eqref{eq:sls_closed_loop_responses_definition}.    
\end{proposition}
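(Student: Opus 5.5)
The plan is to substitute the closed-loop relations \eqref{eq:sls_closed_loop_state_input_trajectories} into the cost \eqref{eq:finite_horizon_cost_K_P} and then use the trace identity for quadratic forms of a zero mean random vector. Before doing so, I would verify that the interconnection of \eqref{eq:system_dynamics_compact} with $\mathbf{u} = \mathbf{K}\mathbf{y}$ indeed gives $\mathbf{x} = \bm{\Phi}_x\bm{\xi}$ and $\mathbf{u} = \bm{\Phi}_u\bm{\xi}$, with the maps defined in \eqref{eq:sls_closed_loop_responses_definition}.

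\textbf{Step 1: closed-loop maps.} Substituting $\mathbf{u} = \mathbf{K}(\mathbf{C}\mathbf{x} + \mathbf{v})$ into the state equation gives
\begin{equation*}
(\mathbf{I} - \mathbf{Z}(\mathbf{A} + \mathbf{B}\mathbf{K}\mathbf{C}))\mathbf{x} = \mathbf{w} + \mathbf{Z}\mathbf{B}\mathbf{K}\mathbf{v}.
\end{equation*}
The matrix $\mathbf{Z}(\mathbf{A}+\mathbf{B}\mathbf{K}\mathbf{C})$ is strictly block lower-triangular, because $\mathbf{Z}$ is a block downshift and the remaining factors are block lower-triangular. It is therefore nilpotent, so $\mathbf{I} - \mathbf{Z}(\mathbf{A}+\mathbf{B}\mathbf{K}\mathbf{C})$ is invertible. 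Solving for $\mathbf{x}$ yields $\mathbf{x} = \bm{\Phi}_{xx}\mathbf{w} + \bm{\Phi}_{xy}\mathbf{v}$. Substituting back into $\mathbf{u} = \mathbf{K}\mathbf{C}\mathbf{x} + \mathbf{K}\mathbf{v}$ yields $\mathbf{u} = \bm{\Phi}_{ux}\mathbf{w} + \bm{\Phi}_{uy}\mathbf{v}$. This is pure bookkeeping.

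\textbf{Step 2: cost as a trace.} Write
\begin{equation*}
\mathbf{x}^\top\mathbf{Q}\mathbf{x} + \mathbf{u}^\top\mathbf{R}\mathbf{u} = \bm{\xi}^\top \bm{\Phi}^\top \mathbf{W} \bm{\Phi}\,\bm{\xi},
\end{equation*}
where $\bm{\Phi} = \begin{bmatrix}\bm{\Phi}_x^\top & \bm{\Phi}_u^\top\end{bmatrix}^\top$ is the full block response matrix and $\mathbf{W} = \operatorname{blkdiag}(\mathbf{Q},\mathbf{R})$. Taking expectations and using linearity of trace gives
\begin{equation*}
\mathbb{E}_{\mathbb{P}}[\bm{\xi}^\top M\bm{\xi}] = \operatorname{tr}(M\,\mathbb{E}_{\mathbb{P}}[\bm{\xi}\bm{\xi}^\top]).
\end{equation*}
Since $\mathbb{P}\in\mathcal{P}$ has zero mean, $\mathbb{E}_{\mathbb{P}}[\bm{\xi}\bm{\xi}^\top] = \bm{\Sigma}$.

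\textbf{Step 3: Frobenius norm.} Factor $\mathbf{W} = \mathbf{W}^{1/2}\mathbf{W}^{1/2}$ using the symmetric square roots of $\mathbf{Q}\succeq 0$ and $\mathbf{R}\succ 0$, and $\bm{\Sigma} = \bm{\Sigma}^{1/2}\bm{\Sigma}^{1/2}$ using the symmetric PSD root. The cyclic property of the trace then gives
\begin{equation*}
\operatorname{tr}(\bm{\Sigma}^{1/2}\bm{\Phi}^\top\mathbf{W}^{1/2}\mathbf{W}^{1/2}\bm{\Phi}\bm{\Sigma}^{1/2}) = \|\mathbf{W}^{1/2}\bm{\Phi}\bm{\Sigma}^{1/2}\|_F^2,
\end{equation*}
which is exactly \eqref{eq:finite_horizon_cost_K_P_Frobenius_norm}.

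\textbf{Main obstacle.} There is no real obstacle. The only points needing care are the invertibility in Step 1 and the existence of second moments. The statement implicitly presumes that $\bm{\Sigma}$ exists; if it does not, both sides of \eqref{eq:finite_horizon_cost_K_P_Frobenius_norm} are infinite or undefined consistently, and this case I would simply note.
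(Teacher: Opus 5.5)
Your proof is correct and takes the same route the paper relies on. The paper does not write out a proof: it defers to the system level synthesis reference, and in the proof of Proposition~\ref{prop:bound_ws_cost_nominal_cost_same_K} it uses exactly your quadratic-form, trace and cyclic-shift computation; your check of the closed-loop maps and of the invertibility of $\mathbf{I}-\mathbf{Z}(\mathbf{A}+\mathbf{B}\mathbf{K}\mathbf{C})$ via nilpotency is a welcome addition. One small quibble: your closing remark on nonexistent second moments is unnecessary, since the statement assumes $\bm{\Sigma}$ exists, and it is also slightly inaccurate. The response matrix can have a nontrivial kernel (for instance, $\mathbf{K}=0$ makes the cost independent of $\mathbf{v}$), so the left-hand side can be finite even when $\bm{\Sigma}$ is undefined.
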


Proposition~\ref{prop:finite_horizon_cost_K_P_Frobenius_norm} highlights that, for a given policy $\mathbf{K} \in \mathcal{K}$, the objective~\eqref{eq:finite_horizon_cost_K_P} only depends on the distribution $\mathbb{P} \in \mathcal{P}$ through its covariance matrix $\bm{\Sigma}$. Motivated by this observation, we proceed to bound the maximum difference between the covariance matrix $\hat{\bm{\Sigma}}$ of the empirical law $\hat{\mathbb{P}}$ in~\eqref{eq:empirical_distribution} and that of any other distribution in $\mathbb{B}_{\mathbb{W}}^{\rho}(\hat{\mathbb{P}})$.
\begin{lemma}
\label{le:worst_case_covariance_bound}
    Let $\hat{\beta} = \lambda_{\operatorname{max}}^{\frac{1}{2}}(\hat{\bm{\Sigma}})$ and $\bar{\rho} = \rho (2 \hat{\beta} + \rho)$. Then, for any distribution $\mathbb{P} \in \mathbb{B}_{\mathbb{W}}^{\rho}(\hat{\mathbb{P}})$, it holds that $\|\bm{\Sigma} - \hat{\bm{\Sigma}}\|_2  \leq \bar{\rho}$.
\end{lemma}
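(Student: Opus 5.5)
Fix $\mathbb{P} \in \mathbb{B}_{\mathbb{W}}^{\rho}(\hat{\mathbb{P}})$ and take an optimal (or $\eta$-optimal, then let $\eta \to 0$) coupling $\pi \in \Pi(\hat{\mathbb{P}}, \mathbb{P})$, with $(\hat{\bm{\xi}}, \bm{\xi}) \sim \pi$ and $\mathbb{E}_\pi\|\bm{\xi} - \hat{\bm{\xi}}\|_2^2 \leq \rho^2$. Write $\bm{\Delta} = \bm{\xi} - \hat{\bm{\xi}}$. Both marginals are zero mean: $\hat{\mathbb{P}}$ because it is centered, $\mathbb{P}$ because $\mathbb{P} \in \mathcal{P}$. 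Hence
\begin{equation*}
\bm{\Sigma} = \mathbb{E}_\pi[\bm{\xi}\bm{\xi}^\top], \qquad \hat{\bm{\Sigma}} = \mathbb{E}_\pi[\hat{\bm{\xi}}\hat{\bm{\xi}}^\top],
\end{equation*}
and expanding $\bm{\xi} = \hat{\bm{\xi}} + \bm{\Delta}$ gives
\begin{equation*}
\bm{\Sigma} - \hat{\bm{\Sigma}} = \mathbb{E}_\pi[\hat{\bm{\xi}}\bm{\Delta}^\top] + \mathbb{E}_\pi[\bm{\Delta}\hat{\bm{\xi}}^\top] + \mathbb{E}_\pi[\bm{\Delta}\bm{\Delta}^\top].
\end{equation*}
This matrix is symmetric, so its spectral norm is $\sup_{\|z\|_2 = 1} |z^\top(\bm{\Sigma} - \hat{\bm{\Sigma}})z|$. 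I will bound the quadratic form for a fixed unit vector $z$.

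For fixed $z$, set $a = z^\top\hat{\bm{\xi}}$ and $b = z^\top\bm{\Delta}$. Then
\begin{equation*}
z^\top(\bm{\Sigma} - \hat{\bm{\Sigma}})z = 2\,\mathbb{E}_\pi[ab] + \mathbb{E}_\pi[b^2].
\end{equation*}
By Cauchy--Schwarz, $|\mathbb{E}_\pi[ab]| \leq (\mathbb{E}_\pi[a^2])^{1/2}(\mathbb{E}_\pi[b^2])^{1/2}$. The first factor satisfies $\mathbb{E}_\pi[a^2] = z^\top\hat{\bm{\Sigma}}z \leq \hat{\beta}^2$. For the second, $b^2 \leq \|\bm{\Delta}\|_2^2$ gives $\mathbb{E}_\pi[b^2] \leq \rho^2$. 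Therefore
\begin{equation*}
|z^\top(\bm{\Sigma} - \hat{\bm{\Sigma}})z| \leq 2\hat{\beta}\rho + \rho^2 = \bar{\rho}.
\end{equation*}
Taking the supremum over unit vectors $z$ gives the claim.

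The one step that needs care is the existence of an optimal coupling. Standard optimal transport theory gives it when $\mathbb{P}$ has finite second moments, and $\mathbb{W}(\hat{\mathbb{P}}, \mathbb{P}) \leq \rho$ together with the compact support of $\hat{\mathbb{P}}$ guarantees that. Alternatively, I can work with an $\eta$-optimal coupling, satisfying $\mathbb{E}_\pi\|\bm{\Delta}\|_2^2 \leq \rho^2 + \eta$, run the same bound, and let $\eta \to 0$. A second point to make explicit is where the zero-mean structure enters: it lets us identify covariances with second-moment matrices and so avoid mean-correction terms. This is exactly why $\hat{\mathbb{P}}$ is centered in~\eqref{eq:empirical_distribution} and the ambiguity set is restricted to $\mathcal{P}$.
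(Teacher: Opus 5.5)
Your proof is correct, and it takes a genuinely different route from the paper's. The paper never works with a coupling directly. It lower-bounds $\mathbb{W}(\hat{\mathbb{P}},\mathbb{P})$ by the Gelbrich distance $\mathbb{G}(\hat{\mathbb{P}},\mathbb{P})$ via Gelbrich's theorem. It then invokes an external covariance lemma (Lemma~3.5 of Khodakaramzadeh et al.), which says that $\mathbb{G}\le\rho$ implies $\|\bm{\Sigma}-\hat{\bm{\Sigma}}\|_F\le \mathbb{G}\,(2\hat{\beta}+\rho)$. Finally it relaxes the Frobenius norm to the spectral norm.

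You instead take a (near-)optimal coupling and use the zero-mean property of both marginals to identify covariances with second-moment matrices. A single Cauchy--Schwarz step along each unit direction $z$ then closes the argument. The two-sided estimate $-2\hat{\beta}\rho \le 2\mathbb{E}_\pi[ab]+\mathbb{E}_\pi[b^2]\le 2\hat{\beta}\rho+\rho^2$ is right, and your treatment of existence of the coupling and finiteness of second moments is adequate.

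What your route buys: it is elementary and self-contained. It is essentially the coupling-level version of the algebra behind the cited covariance lemma, and it removes the reliance on two external results.

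What the paper's route buys: it yields the stronger Frobenius-norm bound directly. Because it passes through $\mathbb{G}$, it also shows that the conclusion holds on the larger, purely moment-based Gelbrich ball, which is exactly the ``ignore higher-order moments'' outer-approximation viewpoint the paper builds on.

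Your approach recovers the Frobenius bound with little extra work. Test $\bm{\Sigma}-\hat{\bm{\Sigma}}$ against an arbitrary $\mathbf{M}$ with $\|\mathbf{M}\|_F=1$ instead of $zz^\top$. The same Cauchy--Schwarz step gives $|\mathbb{E}_\pi[\hat{\bm{\xi}}^\top\mathbf{M}\bm{\Delta}]|\le\hat{\beta}\rho$, using $\operatorname{Tr}(\mathbf{M}^\top\hat{\bm{\Sigma}}\mathbf{M})\le\hat{\beta}^2$, and the bound $|\mathbb{E}_\pi[\bm{\Delta}^\top\mathbf{M}\bm{\Delta}]|\le\rho^2$ also holds. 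Together these give $\|\bm{\Sigma}-\hat{\bm{\Sigma}}\|_F\le 2\hat{\beta}\rho+\rho^2$.
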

\begin{proof}
    We begin by recalling that, for any distribution $\mathbb{P} \in \mathbb{B}_{\mathbb{W}}^{\rho}(\hat{\mathbb{P}})$, we have $\mathbb{W}(\hat{\mathbb{P}}, \mathbb{P}) \leq \rho$ by definition in~\eqref{eq:wasserstein_ambiguity_set}. By applying the Gelbrich bound~\cite[Theorem 2.1]{gelbrich1990formula}, we then have that $\mathbb{G}(\hat{\mathbb{P}}, \mathbb{P}) \leq \mathbb{W}(\hat{\mathbb{P}}, \mathbb{P}) \leq \rho$, where $\mathbb{G}(\hat{\mathbb{P}}, \mathbb{P}) = \sqrt{\operatorname{Tr}\left(\hat{\bm{\Sigma}} + \bm{\Sigma} - 2 \left(\hat{\bm{\Sigma}}^{\frac{1}{2}} \bm{\Sigma} \hat{\bm{\Sigma}}^{\frac{1}{2}}\right)^{\frac{1}{2}}\right)}$ denotes the Gelbrich distance between $\hat{\mathbb{P}}$ and $\mathbb{P}$. Moreover, using~\cite[Lemma 3.5]{khodakaramzadeh2026saddle}, we have that $\mathbb{G}(\hat{\mathbb{P}}, \mathbb{P}) \leq \rho$ implies
    \begin{equation*}
        \mathbb{G}(\hat{\mathbb{P}}, \mathbb{P}) \geq \frac{\|\bm{\Sigma} - \hat{\bm{\Sigma}}\|_F}{2 \lambda_{\operatorname{max}}^{\frac{1}{2}}(\hat{\bm{\Sigma}}) + \rho} \geq \frac{\|\bm{\Sigma} - \hat{\bm{\Sigma}}\|_2}{2 \lambda_{\operatorname{max}}^{\frac{1}{2}}(\hat{\bm{\Sigma}}) + \rho}\,,
    \end{equation*}
    where the last step follows from standard norm dominance inequalities. This concludes the proof.
\end{proof} 

Lemma~\ref{le:worst_case_covariance_bound} suggests that one can construct an outer approximation to the Wasserstein ambiguity set $\mathbb{B}_{\mathbb{W}}^{\rho}(\hat{\mathbb{P}})$ by neglecting higher-order moment information and instead considering the set $\mathbb{B}_{\|\cdot\|}^{\bar{\rho}}(\hat{\mathbb{P}})$ of all distributions $\mathbb{P}$ whose covariance matrix $\bm{\Sigma}$ differs from the empirical covariance matrix $\hat{\bm{\Sigma}}$ by at most $\bar{\rho}$ in the operator norm. Our next result exploits this idea to relax the DR optimal control problem~\eqref{eq:dr_optimal_K} and to establish a deterministic upper bound on the worst-case cost $J_{\text{dr}}(\mathbf{K})$ of a given policy $\mathbf{K} \in \mathcal{K}$ in terms of its average cost $J(\mathbf{K}, \hat{\mathbb{P}})$ under the empirical law $\hat{\mathbb{P}}$.

\begin{proposition}
\label{prop:bound_ws_cost_nominal_cost_same_K}
    Let $\hat{\alpha} = \lambda_{\operatorname{min}}^{-\frac{1}{2}}(\hat{\bm{\Sigma}})$.\footnote{Note that $\hat{\bm{\Sigma}} \succ 0$ almost surely if $\mathbb{P}^\star$ is absolutely continuous and the number of samples $N$ in $\mathcal{D}$ satisfies $N \geq d+1$.} Then, for any $\mathbf{K} \in \mathcal{K}$%
    \begin{equation}        
    \label{eq:bound_ws_cost_nominal_cost_same_K}
        \sup_{\mathbb{P \in \mathbb{B}_{\mathbb{W}}^{\rho}(\hat{\mathbb{P}})}} ~ J(\mathbf{K}, \mathbb{P}) \leq J(\mathbf{K}, \hat{\mathbb{P}}) \left( 1 + \bar{\rho} \hat{\alpha}^2\right)\,.
    \end{equation}
\end{proposition}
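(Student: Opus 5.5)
The plan is to combine the Frobenius-norm representation of Proposition~\ref{prop:finite_horizon_cost_K_P_Frobenius_norm} with the covariance perturbation bound of Lemma~\ref{le:worst_case_covariance_bound}. Throughout, $\mathbf{K} \in \mathcal{K}$ is fixed. We write $\mathbf{M} = \operatorname{blkdiag}(\mathbf{Q}^{\frac{1}{2}}, \mathbf{R}^{\frac{1}{2}}) \bm{\Phi}$, where $\bm{\Phi}$ is the block matrix of closed-loop maps. By Proposition~\ref{prop:finite_horizon_cost_K_P_Frobenius_norm}, for every $\mathbb{P} \in \mathcal{P}$ with covariance $\bm{\Sigma}$ we have $J(\mathbf{K}, \mathbb{P}) = \operatorname{Tr}(\mathbf{M}^\top \mathbf{M} \bm{\Sigma})$. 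The map $\bm{\Sigma} \mapsto J(\mathbf{K},\mathbb{P})$ is therefore linear. The empirical law $\hat{\mathbb{P}}$ is zero mean by construction, so in particular $J(\mathbf{K}, \hat{\mathbb{P}}) = \operatorname{Tr}(\mathbf{M}^\top \mathbf{M} \hat{\bm{\Sigma}})$.

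Now take any $\mathbb{P} \in \mathbb{B}_{\mathbb{W}}^{\rho}(\hat{\mathbb{P}})$. We split
\begin{equation*}
J(\mathbf{K}, \mathbb{P}) = J(\mathbf{K}, \hat{\mathbb{P}}) + \operatorname{Tr}\big(\mathbf{M}^\top \mathbf{M} (\bm{\Sigma} - \hat{\bm{\Sigma}})\big).
\end{equation*}
Since $\mathbf{M}^\top \mathbf{M} \succeq 0$ and $\bm{\Sigma} - \hat{\bm{\Sigma}}$ is symmetric, we have $\bm{\Sigma} - \hat{\bm{\Sigma}} \preceq \|\bm{\Sigma} - \hat{\bm{\Sigma}}\|_2 \mathbf{I}$. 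This gives
\begin{equation*}
\operatorname{Tr}\big(\mathbf{M}^\top \mathbf{M} (\bm{\Sigma} - \hat{\bm{\Sigma}})\big) \leq \|\bm{\Sigma} - \hat{\bm{\Sigma}}\|_2 \operatorname{Tr}(\mathbf{M}^\top \mathbf{M}) = \|\bm{\Sigma} - \hat{\bm{\Sigma}}\|_2 \|\mathbf{M}\|_F^2.
\end{equation*}
Lemma~\ref{le:worst_case_covariance_bound} then bounds $\|\bm{\Sigma} - \hat{\bm{\Sigma}}\|_2$ by $\bar{\rho}$.

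The remaining step is to compare $\|\mathbf{M}\|_F^2$ with the nominal cost. This is where the invertibility of $\hat{\bm{\Sigma}}$ is used, as guaranteed almost surely by the footnote. Write $\mathbf{M} = (\mathbf{M}\hat{\bm{\Sigma}}^{\frac{1}{2}})\hat{\bm{\Sigma}}^{-\frac{1}{2}}$ and apply $\|\mathbf{X}\mathbf{Y}\|_F \leq \|\mathbf{X}\|_F \|\mathbf{Y}\|_2$. This yields
\begin{equation*}
\|\mathbf{M}\|_F^2 \leq \|\mathbf{M}\hat{\bm{\Sigma}}^{\frac{1}{2}}\|_F^2 \, \lambda_{\operatorname{min}}^{-1}(\hat{\bm{\Sigma}}) = J(\mathbf{K}, \hat{\mathbb{P}})\, \hat{\alpha}^2.
\end{equation*}
Combining the three displays gives $J(\mathbf{K}, \mathbb{P}) \leq J(\mathbf{K}, \hat{\mathbb{P}})(1 + \bar{\rho}\hat{\alpha}^2)$ for every $\mathbb{P}$ in the ball. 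The right-hand side does not depend on $\mathbb{P}$, so taking the supremum over $\mathbb{B}_{\mathbb{W}}^{\rho}(\hat{\mathbb{P}})$ yields~\eqref{eq:bound_ws_cost_nominal_cost_same_K}.

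The only delicate point is the trace inequality $\operatorname{Tr}(\mathbf{P}\mathbf{D}) \leq \|\mathbf{D}\|_2 \operatorname{Tr}(\mathbf{P})$ for $\mathbf{P} \succeq 0$ and symmetric $\mathbf{D}$. It follows by writing $\operatorname{Tr}(\mathbf{P}\mathbf{D}) = \operatorname{Tr}(\mathbf{P}^{\frac{1}{2}}\mathbf{D}\mathbf{P}^{\frac{1}{2}})$ and bounding $\mathbf{D}$ by its largest eigenvalue in the Loewner order. Beyond that, one must check that Lemma~\ref{le:worst_case_covariance_bound} applies to every element of the ball, and it does, since the lemma is stated for arbitrary $\mathbb{P} \in \mathbb{B}_{\mathbb{W}}^{\rho}(\hat{\mathbb{P}})$.
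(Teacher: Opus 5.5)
Your proof is correct and follows essentially the same route as the paper. Both use Proposition~\ref{prop:finite_horizon_cost_K_P_Frobenius_norm} to write the cost as a trace that is linear in the covariance, bound the perturbation term by $\|\bm{\Sigma}-\hat{\bm{\Sigma}}\|_2\,\|\mathbf{M}\|_F^2 \leq \bar{\rho}\,\|\mathbf{M}\|_F^2$ via Loewner monotonicity and Lemma~\ref{le:worst_case_covariance_bound}, and then use $J(\mathbf{K},\hat{\mathbb{P}}) \geq \lambda_{\min}(\hat{\bm{\Sigma}})\|\mathbf{M}\|_F^2$. The only cosmetic difference is that you argue pointwise for each $\mathbb{P}$ and take the supremum at the end, whereas the paper first relaxes to a supremum over covariance perturbations $\bm{\Delta}$ with $\|\bm{\Delta}\|_2 \leq \bar{\rho}$.
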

\begin{proof}
    By Lemma~\ref{le:worst_case_covariance_bound}, we have that $\mathbb{B}_{\mathbb{W}}^{\rho}(\hat{\mathbb{P}}) \subseteq \mathbb{B}_{\|\cdot\|}^{\bar{\rho}}(\hat{\mathbb{P}})$, where $\mathbb{B}_{\|\cdot\|}^{\bar{\rho}}(\hat{\mathbb{P}}) = \{\mathbb{P} \in \mathcal{P} : \bm{\Sigma} \succeq 0, \|\bm{\Sigma} - \hat{\bm{\Sigma}}\|_2 \leq \bar{\rho}\}$. Hence, by definition of Frobenius norm and using the fact that the trace operator is invariant under circular shifts, we have that
    \begin{align*}
        J_{\text{dr}}(\mathbf{K}) &\leq \sup_{\substack{\bm{\Delta}: \|\bm{\Delta}\|_2 \leq \bar{\rho}\\\hat{\bm{\Sigma}} + \bm{\Delta} \succeq 0}} ~ \norm{
        \mathbf{S}^{\frac{1}{2}}
        \begin{bmatrix}
            \bm{\Phi}_{xx} & \bm{\Phi}_{xy}\\ \bm{\Phi}_{ux} & \bm{\Phi}_{uy}
        \end{bmatrix} 
        (\hat{\bm{\Sigma}} + \bm{\Delta})^{\frac{1}{2}}}_{F}^2\\
        &= \sup_{\substack{\bm{\Delta}: \|\bm{\Delta}\|_2 \leq \bar{\rho}\\\hat{\bm{\Sigma}} + \bm{\Delta} \succeq 0}} ~ \operatorname{Tr} \left(
        \begin{bmatrix}
            \bm{\Phi}_{x}\\ \bm{\Phi}_{u} 
        \end{bmatrix}^\top
        \mathbf{S}
        \begin{bmatrix}
            \bm{\Phi}_{x}\\ \bm{\Phi}_{u} 
        \end{bmatrix} 
        (\hat{\bm{\Sigma}} + \bm{\Delta})\right)\,,
    \end{align*}
    for any $\mathbf{K} \in \mathcal{K}$, where $\mathbf{S} = \operatorname{blkdiag}(\mathbf{Q}, \mathbf{R})$ for compactness. By linearity of the trace operator, we then have 
    \begin{align}
        J_{\text{dr}}(\mathbf{K}) &\leq J(\mathbf{K}, \hat{\mathbb{P}}) + \sup_{\substack{\bm{\Delta}: \|\bm{\Delta}\|_2 \leq \bar{\rho}\\\hat{\bm{\Sigma}} + \bm{\Delta} \succeq 0}} ~ \operatorname{Tr} \left(
        \begin{bmatrix}
            \bm{\Phi}_{x}\\ \bm{\Phi}_{u} 
        \end{bmatrix}^\top
        \mathbf{S}
        \begin{bmatrix}
            \bm{\Phi}_{x}\\ \bm{\Phi}_{u} 
        \end{bmatrix} 
        \bm{\Delta}\right) \nonumber\\
        \label{eq:additive_bound_ws_cost_nominal_cost_same_K}
        &\leq J(\mathbf{K}, \hat{\mathbb{P}}) + \bar{\rho} 
        \norm{
        \mathbf{S}^{\frac{1}{2}}
        \begin{bmatrix}
            \bm{\Phi}_{x}\\ \bm{\Phi}_{u}
        \end{bmatrix}}_{F}^2\,,
    \end{align}
    where the last inequality uses the monotonicity of the trace inner product with respect to the Loewner order over the cone of positive semidefinite matrices. Having established the additive bound~\eqref{eq:additive_bound_ws_cost_nominal_cost_same_K}, the relative bound~\eqref{eq:bound_ws_cost_nominal_cost_same_K} then follows from the bound $J(\mathbf{K}, \hat{\mathbb{P}}) \geq \lambda_{\operatorname{min}}(\hat{\bm{\Sigma}}) \norm{
    \mathbf{S}^{\frac{1}{2}}
    \begin{bmatrix}
        \bm{\Phi}_{x}\\ \bm{\Phi}_{u}
    \end{bmatrix}}_{F}^2$ and the definition of $\hat{\alpha}$.
\end{proof}

Proposition~\ref{prop:bound_ws_cost_nominal_cost_same_K} bounds how much the performance of a policy $\mathbf{K} \in \mathcal{K}$ can deteriorate when evaluating the objective~\eqref{eq:finite_horizon_cost_K_P} under an arbitrary distribution $\mathbb{P} \in \mathbb{B}_{\mathbb{W}}^\rho(\hat{\mathbb{P}})$ rather than under $\hat{\mathbb{P}}$. In particular, we note that the upper bound~\eqref{eq:bound_ws_cost_nominal_cost_same_K} in Proposition~\ref{prop:bound_ws_cost_nominal_cost_same_K} is tight, in the sense that the right-hand side of~\eqref{eq:bound_ws_cost_nominal_cost_same_K} reduces to the nominal cost $J(\mathbf{K}, \hat{\mathbb{P}})$ as $\rho \to 0$. 

We now continue our analysis by characterizing a Wasserstein radius $\rho^\star$ that, given a fixed number $N$ of samples $\bm{\xi}^k \in \mathcal{D}$, ensures that the true distribution $\mathbb{P}^\star$ lies in the ambiguity set $\mathbb{B}_{\mathbb{W}}^{\rho^\star}(\hat{\mathbb{P}})$ with high probability. This will allow us to derive a high-confidence bound on the price of distributional robustness using a covariance perturbation argument similar to Proposition~\ref{prop:bound_ws_cost_nominal_cost_same_K}. To this end, we impose the following standard light-tailedness assumption on $\mathbb{P}^\star$.
\begin{assumption}
\label{ass:true_distribution_light_tailed}
    The unknown true distribution $\mathbb{P}^\star$ of $\bm{\xi}$ is light-tailed, in the sense that there exist $\gamma > 2$ and $\tau > 0$ such that $\mathbb{E}_{\mathbb{P}^{\star}}[e^{\|\bm{\xi}\|^\gamma}] \leq \tau$.
\end{assumption}
Our next result shows how to calibrate the Wasserstein radius $\rho$ by leveraging concentration results for empirical measures~\cite{fournier2015rate} and subgaussian random variables~\cite{Vershynin_2026}. For simplicity, we assume in the following that $d > 4$; a similar result also holds true without this assumption but requires a more complicated formula for $\rho^\star_w$ in Lemma~\ref{le:convergence_rate_empirical_measure} below, see also~\cite{fournier2015rate, kuhn2019wasserstein, kuhn2025distributionally}.
\begin{lemma}
\label{le:convergence_rate_empirical_measure}
    Let Assumption~\ref{ass:true_distribution_light_tailed} hold. Then, there exist constants $c_1, c_2, c_3 > 0$ that depend on $\mathbb{P}^\star$ only through $\gamma$, $\tau$, and $d$ such that, for any confidence parameter $\zeta \in (0,1]$, the concentration inequality ${\mathbb{P}^\star}^N(\mathbb{P}^\star \in \mathbb{B}_{\mathbb{W}}^\rho(\hat{\mathbb{P}})) \geq 1-\zeta$ holds whenever $\rho$ exceeds $\rho^\star(N,\zeta) = \rho_{w}^\star(N, \frac{\zeta}{2}) + \rho_{\mu}^\star(N, \frac{\zeta}{2})$, where
    \begin{equation*}
        \rho_w^\star(N, \zeta) =
        \begin{cases}
             \left(\frac{\operatorname{log}(c_1 /\zeta)}{c_2 N}\right)^{\frac{1}{d}}%
             ~ &\text{if} ~ N \geq \frac{\operatorname{log}(c_1 / \zeta)}{c_2}\,,\\
            \left(\frac{\operatorname{log}(c_1 /\zeta)}{c_2 N}\right)^{\operatorname{\frac{1}{\gamma}}} &\text{otherwise}\,,
        \end{cases}
    \end{equation*}
    and $\rho_\mu^\star(N, \zeta) = \sqrt{\frac{4K^2_{\gamma, \tau}\left(\operatorname{log}\left(\frac{2}{\zeta}\right) + d \operatorname{log} 5\right)}{c_3N}}$.
\end{lemma}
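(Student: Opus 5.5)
The plan is to split the discrepancy between $\mathbb{P}^\star$ and the centered empirical law $\hat{\mathbb{P}}$ into two parts: the usual empirical-measure error, and the error caused by centering at $\bar{\bm{\xi}}$ instead of at the true mean $0$. Let $\tilde{\mathbb{P}} = \frac{1}{N}\sum_{k=1}^N \delta(\bm{\xi}^k)$ be the uncentered empirical law. Since $\mathbb{P}^\star \in \mathcal{P}$ by assumption, membership $\mathbb{P}^\star \in \mathbb{B}_{\mathbb{W}}^\rho(\hat{\mathbb{P}})$ reduces to $\mathbb{W}(\hat{\mathbb{P}}, \mathbb{P}^\star) \leq \rho$. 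By the triangle inequality for the type-2 Wasserstein distance,
\begin{equation*}
    \mathbb{W}(\hat{\mathbb{P}}, \mathbb{P}^\star) \leq \mathbb{W}(\tilde{\mathbb{P}}, \mathbb{P}^\star) + \mathbb{W}(\hat{\mathbb{P}}, \tilde{\mathbb{P}}) \leq \mathbb{W}(\tilde{\mathbb{P}}, \mathbb{P}^\star) + \|\bar{\bm{\xi}}\|_2\,.
\end{equation*}
The second inequality uses the coupling that pairs each atom $\bm{\xi}^k$ with $\bm{\xi}^k - \bar{\bm{\xi}}$, whose transport cost is exactly $\|\bar{\bm{\xi}}\|_2^2$. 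It then suffices to show that each term exceeds its threshold with probability at most $\zeta/2$, and finish with a union bound.

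For the first term, I will invoke the measure concentration result of~\cite[Theorem 2]{fournier2015rate}, in the form restated in~\cite{kuhn2019wasserstein}, applied to $\mathbb{W}^2$ with $p=2$. Under Assumption~\ref{ass:true_distribution_light_tailed} and for $d>4$ (so $p < d/2$), it gives
\begin{equation*}
    {\mathbb{P}^\star}^N\big(\mathbb{W}^2(\tilde{\mathbb{P}},\mathbb{P}^\star) \geq r\big) \leq c_1 \exp(-c_2 N r^{d/2})
\end{equation*}
for $r \leq 1$, and $c_1 \exp(-c_2 N r^{\gamma/2})$ for $r>1$, with constants depending only on $\gamma,\tau,d$. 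Setting the right-hand side equal to $\zeta/2$ and solving for $\rho = \sqrt{r}$ recovers the two branches of $\rho^\star_w(N,\zeta/2)$. The case split $N \geq \log(c_1/\zeta)/c_2$ is exactly the condition $r \leq 1$. Any discrepancy in exponent conventions can be absorbed into $c_2$, and I will follow the bookkeeping of~\cite{kuhn2019wasserstein}.

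For the second term, I need a high-probability bound on $\|\bar{\bm{\xi}}\|_2$. The steps are as follows.
\begin{enumerate}
    \item Assumption~\ref{ass:true_distribution_light_tailed} with $\gamma>2$ implies that $\bm{\xi}$ is a subgaussian random vector. Indeed, $e^{\|\bm{\xi}\|^2} \leq e\cdot e^{\|\bm{\xi}\|^\gamma}$, so $\mathbb{E}[e^{\|\bm{\xi}\|^2}] \leq e\tau$, and every one-dimensional marginal $\langle \bm{\xi}, u\rangle$ has subgaussian norm at most some $K_{\gamma,\tau}$.
    \item The sample mean $\bar{\bm{\xi}}$ is a normalized sum of independent zero-mean subgaussian vectors. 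By Hoeffding's inequality for subgaussian sums~\cite{Vershynin_2026}, each $\langle \bar{\bm{\xi}}, u\rangle$ satisfies $\|\langle \bar{\bm{\xi}}, u\rangle\|_{\psi_2} \lesssim K_{\gamma,\tau}/\sqrt{N}$.
    \item A $1/2$-net argument on the unit sphere, which has cardinality at most $5^d$, gives $\|\bar{\bm{\xi}}\|_2 \leq 2\max_{u \in \mathcal{N}}\langle\bar{\bm{\xi}},u\rangle$. A union bound with the tail $2\exp(-c_3 N t^2/(4K^2_{\gamma,\tau}))$ then yields probability at most $\zeta/2$ once $t = \rho^\star_\mu(N,\zeta/2)$.
\end{enumerate}
This matches the $\log(2/\zeta) + d\log 5$ structure of $\rho^\star_\mu$, with $c_3$ the absolute constant from Hoeffding's inequality.

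Combining the two events gives $\mathbb{W}(\hat{\mathbb{P}},\mathbb{P}^\star) \leq \rho^\star(N,\zeta) \leq \rho$ with probability at least $1-\zeta$.

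I expect the main obstacle to be the bookkeeping in the first term. The cited theorem bounds $\mathbb{W}_2^2$ and carries a distinct case structure, so the exponents $1/d$ versus $2/d$ and $1/\gamma$ versus $2/\gamma$ must be reconciled carefully with the stated formula. I will adopt the convention of~\cite{kuhn2019wasserstein}, which matches the displayed expression, absorbing constants where necessary. A second, milder care point is making the subgaussian constant $K_{\gamma,\tau}$ depend only on $\gamma$ and $\tau$.
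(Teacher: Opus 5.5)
Your proposal is correct and follows essentially the same approach as the paper. You use the same triangle-inequality split into the uncentered empirical error, handled by Fournier--Guillin's Theorem~2, and the centering error $\|\bar{\bm{\xi}}\|$, handled by subgaussian projections, Hoeffding's inequality, a $\frac{1}{2}$-net of cardinality at most $5^d$, and a union bound; the only difference is cosmetic, in that you obtain subgaussianity from $e^{\|\bm{\xi}\|^2}\le e\cdot e^{\|\bm{\xi}\|^\gamma}$ rather than from the paper's Young's inequality. The bookkeeping obstacle you anticipate does not arise: applying the concentration bound to $\mathbb{W}^2$ with $p=2$ and taking square roots, as you already do, yields exactly the stated exponents $\frac{1}{d}$ and $\frac{1}{\gamma}$ and the stated case split, so nothing needs to be absorbed into $c_2$.
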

\begin{proof}
    Let $\hat{\mathbb{P}}_{\mu} = \frac{1}{N}\sum_{k = 1}^N ~ \delta(\bm{\xi}^k)$ denote the uncentered version of the empirical distribution in~\eqref{eq:empirical_distribution}. By the triangle inequality, we have that $\mathbb{W}(\hat{\mathbb{P}}, \mathbb{P}^\star) \leq \mathbb{W}(\hat{\mathbb{P}}, \hat{\mathbb{P}}_{\mu}) + \mathbb{W}(\hat{\mathbb{P}}_{\mu}, \mathbb{P}^\star)$. Hence, by the union bound, to prove Lemma~\ref{le:convergence_rate_empirical_measure} it suffices to show that each of the two events $\mathbb{W}(\hat{\mathbb{P}}, \hat{\mathbb{P}}_{\mu}) > \rho^\star_\mu(N, \frac{\zeta}{2})$ and $\mathbb{W}(\hat{\mathbb{P}}_{\mu}, \mathbb{P}^\star) > \rho^\star_w(N, \frac{\zeta}{2})$ has probability at most $\frac{\zeta}{2}$. 
    
    We then note that ${\mathbb{P}^\star}^N(\mathbb{W}(\hat{\mathbb{P}}_{\mu}, \mathbb{P}^\star) > \rho^\star_w(N, \frac{\zeta}{2})) \leq \frac{\zeta}{2}$ is a direct consequence of~\cite[Theorem~2]{fournier2015rate}. We are therefore left with bounding the effect of centering the samples in~\eqref{eq:empirical_distribution}. To this end, we note that $\mathbb{W}(\hat{\mathbb{P}}, \hat{\mathbb{P}}_{\mu}) = \|\bar{\bm{\xi}}\|$. In what follows, we show that, under Assumption~\ref{ass:true_distribution_light_tailed}, $\bar{\bm{\xi}}$ concentrates around zero as all scalar projections of the random vector $\bm{\xi} \sim \mathbb{P}^\star$ are subgaussian~\cite{Vershynin_2026}. Let $\bm{\psi} \in \mathbb{S}^{d-1}$ be an arbitrary unit vector. By Young's inequality, we have that $(\bm{\psi}^\top \bm{\xi})^2 \leq \frac{2}{\gamma}|\bm{\psi}^\top \bm{\xi}|^\gamma + \frac{\gamma - 2}{\gamma}$. Moreover, since $|\bm{\psi}^\top \bm{\xi}| \leq \|\bm{\xi}\|$ for any $\bm{\psi}$ in the unit sphere $\mathbb{S}^{d-1}$, we obtain
    \begin{align*}
        \mathbb{E}_{\mathbb{P}^{\star}}[e^{\frac{\gamma}{2}(\bm{\psi}^\top \bm{\xi})^2}] &\leq e^{\frac{\gamma - 2}{2}}\mathbb{E}_{\mathbb{P}^{\star}}[e^{|\bm{\psi}^\top \bm{\xi}|^\gamma}]\\
        &\leq e^{\frac{\gamma - 2}{2}}\mathbb{E}_{\mathbb{P}^{\star}}[e^{\| \bm{\xi}\|^\gamma}]
        \leq \tau e^{\frac{\gamma - 2}{2}}\,.%
    \end{align*}
    Let $M_{\gamma, \tau} = \tau e^{\frac{\gamma - 2}{2}}$ and $K_{\gamma, \tau} = \sqrt{\frac{2}{\gamma}\max\left\{1, \frac{\operatorname{log}(M_{\gamma, \tau})}{\operatorname{log}2}\right\}}$. With this notation in place, we then have 
    \begin{align*}
        \mathbb{E}_{\mathbb{P}^{\star}}[e^{(\bm{\psi}^\top \bm{\xi})^2/K_{\gamma, \tau}^2}] &= \mathbb{E}_{\mathbb{P}^{\star}}\left[e^{\frac{\gamma}{2}(\bm{\psi}^\top \bm{\xi})^2 \cdot \frac{1}{\max\left\{1, \frac{\operatorname{log}(M_{\gamma, \tau})}{\operatorname{log}2}\right\}}}\right]\\
        &~\overset{(a)}{\leq} \mathbb{E}_{\mathbb{P}^{\star}}\left[e^{\frac{\gamma}{2}(\bm{\psi}^\top \bm{\xi})^2}\right]^{\frac{1}{\max\left\{1, \frac{\operatorname{log}(M_{\gamma, \tau})}{\operatorname{log}2}\right\}}}\\
        &\leq M_{\gamma, \tau}^{\min\left\{1, \frac{\operatorname{log}2}{\operatorname{log}(M_{\gamma, \tau})}\right\}} \leq 2\,,
    \end{align*}
    where $(a)$ follows from Jensen's inequality applied to the function $z \mapsto z^\theta$, which is concave for every $\theta \in (0,1]$. Hence, for any $\bm{\psi} \in \mathbb{S}^{d-1}$, $\bm{\psi}^\top \bm{\xi}$ is subgaussian with subgaussian norm at most $K_{\gamma, \tau}$, see~\cite[Proposition~2.6.6]{Vershynin_2026}. Therefore, every scalar sample mean concentrates. Formally, by~\cite[Theorem~2.7.3]{Vershynin_2026}, we have that for every $\rho_\mu \geq 0$ there exists an absolute constant $c_3 > 0$ such that
    \begin{equation}
    \label{eq:concentration_mean_scalar_projection}
        {\mathbb{P}^\star}^N (|\bm{\psi}^\top \bar{\bm{\xi}}| > \rho_\mu) \leq 2 e^{-\frac{c_3N\rho_\mu^2}{K^2_{\gamma, \tau}}}\,.
    \end{equation}
    We now show that~\eqref{eq:concentration_mean_scalar_projection} implies a similar tail bound also for the sample mean of the random vector $\bm{\xi}$; in fact, intuitively, $\norm{\bar{\bm{\xi}}}$ cannot grow too large as long as all scalar projections $|\bm{\psi}^\top \bar{\bm{\xi}}|$ remain small. Formally, let $\mathcal{S}$ be a $\frac{1}{2}$-net of the unit sphere $\mathbb{S}^{d-1}$. Then, by definition, for every $\bm{\psi} \in \mathbb{S}^{d-1}$, there exists $\bm{\varphi} \in \mathcal{S}$ such that $\norm{\bm{\varphi} - \bm{\psi}} \leq \frac{1}{2}$. Hence, for any $\bm{\xi} \in \mathbb{R}^{d}$ there exists $\bm{\varphi} \in \mathcal{S}$ such that $\norm{\bm{\varphi} - \bm{\xi}_s}^2 = 2 - 2 \bm{\varphi}^\top \bm{\xi}_s \leq \frac{1}{4}$,
    where $\bm{\xi}_s =\frac{\bm{\xi}}{\norm{\bm{\xi}}} \in \mathbb{S}^{d-1}$. Therefore, we obtain that $\bm{\varphi}^\top \bm{\xi} = \bm{\varphi}^\top \bm{\xi}_s \norm{\bm{\xi}} \geq \frac{\norm{\bm{\xi}}}{2}$ and $\norm{\bm{\xi}} \leq 2 \max_{\bm{\varphi} \in \mathcal{S}} ~ |\bm{\varphi}^\top \bm{\xi}|$. In other words, we have shown that if $\bar{\bm{\xi}}$ is such that $\norm{\bar{\bm{\xi}}} > \rho_\mu$, then there must exists some $\bm{\varphi} \in \mathcal{S}$ such that $|\bm{\varphi}^\top \bar{\bm{\xi}}| > \frac{\rho_\mu}{2}$. By the union bound and using~\eqref{eq:concentration_mean_scalar_projection}, we then have that
    \begin{align}
        {\mathbb{P}^\star}^N (\|\bar{\bm{\xi}}\| > \rho_\mu) &\leq \sum_{\bm{\varphi} \in \mathcal{S}} ~ {\mathbb{P}^\star}^N \left(|\bm{\varphi}^\top \bar{\bm{\xi}}| > \frac{\rho_\mu}{2}\right) \nonumber\\
        &\leq 2 |\mathcal{S}| e^{-\frac{c_3N\rho_\mu^2}{4 K^2_{\gamma, \tau}}}\,.\label{eq:sample_mean_tail_bound_rhs}
    \end{align}
    The expression for $\rho_\mu^\star(N, \zeta)$ in the statement of Lemma~\ref{le:convergence_rate_empirical_measure} then follows by observing that $|\mathcal{S}| \leq 5^d$, setting the right-hand side in~\eqref{eq:sample_mean_tail_bound_rhs} equal to $\zeta$, and solving for $\rho_\mu$. 
\end{proof}

Lemma~\ref{le:convergence_rate_empirical_measure} provides a tail bound characterizing the probability that the Wasserstein ambiguity set $\mathbb{B}_{\mathbb{W}}^\rho(\hat{\mathbb{P}})$ constructed from the random samples $\bm{\xi}^k \in \mathcal{D}$ fails to cover $\mathbb{P}^\star$. On the complementary high-probability event $\mathbb{P}^\star \in \mathbb{B}_{\mathbb{W}}^\rho(\hat{\mathbb{P}})$, Lemma~\ref{le:worst_case_covariance_bound} implies that the covariance $\bm{\Sigma}$ of every
distribution $\mathbb{P} \in \mathbb{B}_{\mathbb{W}}^\rho(\hat{\mathbb{P}})$ lies within $2\bar{\rho}$ of the true covariance $\bm{\Sigma}^\star$. Our next result leverages this observation to establish a high-probability suboptimality bound for the DR controller $\mathbf{K}^\star_{\text{dr}}$ relative to the oracle controller $\mathbf{K}^\star$.

\begin{theorem}
\label{th:finite_samples_suboptimality_bound}
    Let Assumption~\ref{ass:true_distribution_light_tailed} hold and let $\alpha^\star = \lambda_{\operatorname{min}}^{-\frac{1}{2}}(\bm{\Sigma}^\star)$. Then, for any confidence parameter $\zeta \in (0,1]$ and any Wasserstein radius $\rho \geq \rho^\star(N, \zeta)$, we have
    \begin{equation}
    \label{eq:finite_samples_suboptimality_bound}
        J(\mathbf{K}_{\emph{dr}}^{\star}, \mathbb{P}^\star) \leq J(\mathbf{K}^\star, \mathbb{P}^\star) \left( 1 + 2 \bar{\rho} {\alpha^\star}^2 \right)\,,
    \end{equation}
    with probability at least $1-\zeta$.
\end{theorem}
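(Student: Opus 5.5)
We work on the event $\mathcal{E} = \{\mathbb{P}^\star \in \mathbb{B}_{\mathbb{W}}^{\rho}(\hat{\mathbb{P}})\}$. By Lemma~\ref{le:convergence_rate_empirical_measure}, $\mathcal{E}$ has probability at least $1-\zeta$ whenever $\rho \geq \rho^\star(N,\zeta)$, and the Wasserstein ball only grows with $\rho$. All subsequent steps are deterministic on $\mathcal{E}$. The plan is the usual sandwich: upper bound the true cost of $\mathbf{K}^\star_{\text{dr}}$ by its DR cost, use optimality of $\mathbf{K}^\star_{\text{dr}}$ to pass to $\mathbf{K}^\star$, and then bound the DR cost of $\mathbf{K}^\star$ by its true cost through a covariance perturbation around $\bm{\Sigma}^\star$ instead of $\hat{\bm{\Sigma}}$.

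\textbf{Step 1: sandwich.} On $\mathcal{E}$, since $\mathbb{P}^\star$ is feasible for the inner supremum in~\eqref{eq:dr_optimal_K},
\begin{equation*}
J(\mathbf{K}^\star_{\text{dr}},\mathbb{P}^\star) \leq J_{\text{dr}}(\mathbf{K}^\star_{\text{dr}}) \leq J_{\text{dr}}(\mathbf{K}^\star) = \sup_{\mathbb{P}\in\mathbb{B}_{\mathbb{W}}^{\rho}(\hat{\mathbb{P}})} J(\mathbf{K}^\star,\mathbb{P}).
\end{equation*}
The second inequality is optimality of $\mathbf{K}^\star_{\text{dr}}$.

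\textbf{Step 2: perturbation around $\bm{\Sigma}^\star$.} For any $\mathbb{P}\in\mathbb{B}_{\mathbb{W}}^{\rho}(\hat{\mathbb{P}})$ with covariance $\bm{\Sigma}$, Lemma~\ref{le:worst_case_covariance_bound} gives $\|\bm{\Sigma}-\hat{\bm{\Sigma}}\|_2 \le \bar\rho$. On $\mathcal{E}$ the same lemma applied to $\mathbb{P}^\star$ gives $\|\bm{\Sigma}^\star-\hat{\bm{\Sigma}}\|_2\le\bar\rho$. The triangle inequality then yields $\|\bm{\Sigma}-\bm{\Sigma}^\star\|_2\le 2\bar\rho$. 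We now repeat the argument of Proposition~\ref{prop:bound_ws_cost_nominal_cost_same_K} with $\bm{\Sigma}^\star$ as the reference. By Proposition~\ref{prop:finite_horizon_cost_K_P_Frobenius_norm} and linearity of the trace, with $\bm{\Delta}=\bm{\Sigma}-\bm{\Sigma}^\star$ and $\mathbf{M}=[\bm{\Phi}_x;\bm{\Phi}_u]^\top\mathbf{S}[\bm{\Phi}_x;\bm{\Phi}_u]\succeq 0$ built from $\mathbf{K}^\star$,
\begin{equation*}
J(\mathbf{K}^\star,\mathbb{P}) = J(\mathbf{K}^\star,\mathbb{P}^\star)+\operatorname{Tr}(\mathbf{M}\bm{\Delta}) \le J(\mathbf{K}^\star,\mathbb{P}^\star)+2\bar\rho\operatorname{Tr}(\mathbf{M}).
\end{equation*}
The last inequality holds because $\bm{\Delta}\preceq 2\bar\rho\mathbf{I}$ and $\mathbf{M}\succeq0$. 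Finally, $J(\mathbf{K}^\star,\mathbb{P}^\star)=\operatorname{Tr}(\mathbf{M}\bm{\Sigma}^\star)\ge\lambda_{\min}(\bm{\Sigma}^\star)\operatorname{Tr}(\mathbf{M})$, so $\operatorname{Tr}(\mathbf{M})\le {\alpha^\star}^2 J(\mathbf{K}^\star,\mathbb{P}^\star)$. Taking the supremum over $\mathbb{P}$ and combining with Step 1 gives~\eqref{eq:finite_samples_suboptimality_bound}.

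\textbf{Main subtlety.} The bound must be stated in terms of $\alpha^\star$ rather than $\hat\alpha$, which forces the reference covariance to be $\bm{\Sigma}^\star$. This is only possible on $\mathcal{E}$, where $\mathbb{P}^\star$ lies in the ball and hence within $\bar\rho$ of $\hat{\bm{\Sigma}}$; this is where the factor $2$ comes from. One detail to check is that $\bar\rho$ is defined through $\hat\beta$, which is random. This is harmless, because the inequality is asserted pointwise on $\mathcal{E}$ with that same random $\bar\rho$. Existence of the minimizers $\mathbf{K}^\star$ and $\mathbf{K}^\star_{\text{dr}}$ is taken as given, as in their definitions.
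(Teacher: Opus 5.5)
Your proposal is correct and follows essentially the same route as the paper: condition on the event $\mathbb{P}^\star \in \mathbb{B}_{\mathbb{W}}^{\rho}(\hat{\mathbb{P}})$ from Lemma~\ref{le:convergence_rate_empirical_measure}, apply Lemma~\ref{le:worst_case_covariance_bound} twice with the triangle inequality to get $\|\bm{\Sigma}-\bm{\Sigma}^\star\|_2\le 2\bar\rho$, and bound $J_{\text{dr}}(\mathbf{K}^\star)$ by the trace/Loewner argument using $\lambda_{\min}(\bm{\Sigma}^\star)$. You then close with the same sandwich $J(\mathbf{K}^\star_{\text{dr}},\mathbb{P}^\star)\le J_{\text{dr}}(\mathbf{K}^\star_{\text{dr}})\le J_{\text{dr}}(\mathbf{K}^\star)$ that the paper uses.
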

\begin{proof}
    We begin by observing that, by Lemma~\ref{le:convergence_rate_empirical_measure}, we have that $\mathbb{P}^\star \in \mathbb{B}_{\mathbb{W}}^{\rho}(\hat{\mathbb{P}})$ with probability at least $1-\zeta$ since $\rho \geq \rho^\star(N, \zeta)$ by assumption; in the remainder of the proof, we condition on this event and omit the qualifier ``with probability at least $1-\zeta$'' for compactness. By Lemma~\ref{le:worst_case_covariance_bound} and the triangle inequality, we then have that, for every $\mathbb{P} \in \mathbb{B}_{\mathbb{W}}^{\rho}(\hat{\mathbb{P}})$ 
    \begin{equation}
    \label{eq:worst_case_covariance_bound_double}
        \|\bm{\Sigma} - \bm{\Sigma}^\star\|_2 \leq \|\bm{\Sigma} - \hat{\bm{\Sigma}}\|_2 + \|\bm{\Sigma}^\star - \hat{\bm{\Sigma}}\|_2 \leq 2 \bar{\rho}\,,
    \end{equation}
    since $\mathbb{P}^\star \in \mathbb{B}_{\mathbb{W}}^{\rho}(\hat{\mathbb{P}})$. We now leverage~\eqref{eq:worst_case_covariance_bound_double} to bound $J_{\text{dr}}(\mathbf{K}^\star)$ in terms of $J(\mathbf{K}^\star, \mathbb{P}^\star)$ using a proof strategy similar to Proposition~\ref{prop:bound_ws_cost_nominal_cost_same_K}. Let $\bm{\Phi}_{x}^\star$ and $\bm{\Phi}_{u}^\star$ denote the closed-loop  system responses induced by the oracle controller $\mathbf{K}^\star$ according to~\eqref{eq:sls_closed_loop_responses_definition}. Using $\mathbb{B}_{\mathbb{W}}^{\rho}(\hat{\mathbb{P}}) \subseteq \mathbb{B}_{\|\cdot\|}^{2\bar{\rho}}(\mathbb{P}^\star)$, we have that
    \begin{align*}
        &\sup_{\mathbb{P \in \mathbb{B}_{\mathbb{W}}^{\rho}(\hat{\mathbb{P}})}} ~ J(\mathbf{K}^\star, \mathbb{P}) \leq \sup_{\substack{\bm{\Delta}: \|\bm{\Delta}\|_2 \leq 2\bar{\rho}\\\bm{\Sigma}^\star + \bm{\Delta} \succeq 0}} ~ \norm{
        \mathbf{S}^{\frac{1}{2}}
        \begin{bmatrix}
            \bm{\Phi}_{x}^\star\\ \bm{\Phi}_{u}^\star
        \end{bmatrix} 
        (\bm{\Sigma}^\star + \bm{\Delta})^{\frac{1}{2}}}_{F}^2\\
        &\leq J(\mathbf{K}^\star, \mathbb{P}^\star) + \sup_{\substack{\bm{\Delta}: \|\bm{\Delta}\|_2 \leq 2\bar{\rho}\\\bm{\Sigma}^\star + \bm{\Delta} \succeq 0}} ~ \operatorname{Tr} \left(
        \begin{bmatrix}
            \bm{\Phi}_{x}^\star\\ \bm{\Phi}_{u}^\star
        \end{bmatrix}^\top
        \mathbf{S}
        \begin{bmatrix}
            \bm{\Phi}_{x}^\star\\ \bm{\Phi}_{u}^\star
        \end{bmatrix} 
        \bm{\Delta}\right)\\
        &\leq J(\mathbf{K}^\star, \mathbb{P}^\star) + 2\bar{\rho} 
        \norm{
        \mathbf{S}^{\frac{1}{2}}
        \begin{bmatrix}
            \bm{\Phi}_{x}^\star\\ \bm{\Phi}_{u}^\star
        \end{bmatrix}}_{F}^2 \!\! \leq J(\mathbf{K}^\star, \mathbb{P}^\star) \! \left(1 + 2 \bar{\rho} {\alpha^\star}^2\right)\,.
    \end{align*}
    Based on the above, we conclude the proof by observing that 
    \begin{align*}
        J(\mathbf{K}_{\text{dr}}^\star, \mathbb{P}^\star) 
        \overset{(a)}{\leq} \sup_{\mathbb{P \in \mathbb{B}_{\mathbb{W}}^{\rho}(\hat{\mathbb{P}})}} ~ J(\mathbf{K}_{\text{dr}}^\star, \mathbb{P}) 
        \overset{(b)}{\leq} \sup_{\mathbb{P \in \mathbb{B}_{\mathbb{W}}^{\rho}(\hat{\mathbb{P}})}} ~ J(\mathbf{K}^\star, \mathbb{P})\,,
    \end{align*}
    where $(a)$ follows from the suboptimality of $\mathbb{P}^\star$ in the adversary’s subproblem of selecting the worst-case distribution for $\mathbf{K}_{\text{dr}}^\star$ within $\mathbb{B}_{\mathbb{W}}^{\rho}(\hat{\mathbb{P}})$, and $(b)$ from the suboptimality of the oracle controller $\mathbf{K}^\star$ for the DR optimal control problem~\eqref{eq:dr_optimal_K}.
\end{proof}

The finite samples guarantees of Theorem~\ref{th:finite_samples_suboptimality_bound} directly yield the following sample complexity result.

\begin{corollary}
    \label{co:price_distributional_robustness_epsilon}
    Let Assumption~\ref{ass:true_distribution_light_tailed} hold. For any  $\epsilon > 0$, fix $\rho =  \sqrt{\hat{\beta}^2 + \frac{\epsilon}{2{\alpha^\star}^2}} - \hat{\beta}$ in~\eqref{eq:wasserstein_ambiguity_set}. Then, with probability at least $1-\zeta$, where $\zeta \in (0,1]$, the price of distributional robustness is at most $\epsilon$, that is, 
    \begin{equation}
    \label{eq:price_distributional_robustness_epsilon}
        J(\mathbf{K}_{\emph{dr}}^{\star}, \mathbb{P}^\star) \leq J(\mathbf{K}^\star, \mathbb{P}^\star) (1 + \epsilon)\,,
    \end{equation}
    if the number of samples $N$ in the training dataset $\mathcal{D}$ exceeds $N^\star(\rho, \zeta) = \max\{N_w^\star(\frac{\rho}{2}, \zeta), N_\mu^\star(\frac{\rho}{2}, \zeta)\}$, where
    \begin{equation}
    \label{eq:number_samples_given_rho_wasserstein}
        N_w^\star(\rho, \zeta) = 
        \begin{cases}
            \frac{\operatorname{log}(2c_1 /\zeta)}{c_2 \rho^{d}}%
            \quad &\text{if} ~ \rho \leq 1\,, \medskip\\
            \frac{\operatorname{log}(2c_1 /\zeta)}{c_2 \rho^{\gamma}} &\text{otherwise}\,,
        \end{cases}
    \end{equation}
    and $N_\mu^\star(\rho, \zeta) = \frac{4 K^2_{\gamma, \tau}\left(\operatorname{log}\left(\frac{4}{\zeta}\right) + d \operatorname{log} 5\right)}{c_3\rho^2}$.
\end{corollary}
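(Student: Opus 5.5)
Our plan is to derive the corollary directly from Theorem~\ref{th:finite_samples_suboptimality_bound}. The argument has two parts: first we check that the chosen radius makes the multiplicative factor equal to $1+\epsilon$, and then we invert the radius formula of Lemma~\ref{le:convergence_rate_empirical_measure} to show that $N \geq N^\star(\rho,\zeta)$ implies $\rho \geq \rho^\star(N,\zeta)$.

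\emph{Step 1 (choice of $\rho$).} The prescribed $\rho$ is the nonnegative root of the quadratic equation $\rho^2 + 2\hat{\beta}\rho - \frac{\epsilon}{2{\alpha^\star}^2} = 0$. Hence $\bar{\rho} = \rho(2\hat{\beta}+\rho) = \frac{\epsilon}{2{\alpha^\star}^2}$, and so $1 + 2\bar{\rho}{\alpha^\star}^2 = 1+\epsilon$. Substituting this into~\eqref{eq:finite_samples_suboptimality_bound} gives~\eqref{eq:price_distributional_robustness_epsilon}, as long as the theorem's hypothesis $\rho \geq \rho^\star(N,\zeta)$ holds.

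\emph{Step 2 (sample-size inversion).} Since $\rho^\star(N,\zeta) = \rho_w^\star(N,\frac{\zeta}{2}) + \rho_\mu^\star(N,\frac{\zeta}{2})$, it suffices to ensure that each summand is at most $\frac{\rho}{2}$. We handle the two summands separately.
\begin{itemize}
\item For the mean term, we solve $\rho_\mu^\star(N,\frac{\zeta}{2}) \leq \frac{\rho}{2}$ for $N$. Squaring gives $N \geq \frac{4K_{\gamma,\tau}^2(\log(4/\zeta) + d\log 5)}{c_3(\rho/2)^2}$, which is exactly $N_\mu^\star(\frac{\rho}{2},\zeta)$.
\item For the Wasserstein term, we again write $r = \rho/2$ and use the piecewise definition with confidence $\zeta/2$, so the logarithm becomes $\log(2c_1/\zeta)$. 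If $r \leq 1$ and $N \geq \frac{\log(2c_1/\zeta)}{c_2 r^d}$, then $\frac{\log(2c_1/\zeta)}{c_2 N} \leq r^d \leq 1$. This places us in the first branch $N \geq \log(2c_1/\zeta)/c_2$, and there $\rho_w^\star = \bigl(\frac{\log(2c_1/\zeta)}{c_2N}\bigr)^{1/d} \leq r$. If $r > 1$ and $N \geq \frac{\log(2c_1/\zeta)}{c_2 r^\gamma}$, then $\frac{\log(2c_1/\zeta)}{c_2N} \leq r^\gamma$, and whichever branch applies we get $\rho_w^\star \leq r$.
\end{itemize}

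\emph{Step 3 (conclusion).} Since $N \geq \max\{N_w^\star, N_\mu^\star\}$, both summands are at most $\frac{\rho}{2}$, so $\rho \geq \rho^\star(N,\zeta)$. Theorem~\ref{th:finite_samples_suboptimality_bound} then applies with probability at least $1-\zeta$.

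\emph{Main obstacle.} The delicate point is the case analysis in the second branch when $r>1$. If $N$ happens to fall in the first branch, we need to bound the quantity raised to $1/d$ by $r$. Because that quantity is at most $1 < r$ in this branch, the bound holds, so the case works out, but it has to be checked explicitly.

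\emph{Caveat on $\hat{\beta}$.} The prescribed $\rho$ depends on $\hat{\beta}$, which is data-dependent, while Theorem~\ref{th:finite_samples_suboptimality_bound} is stated for a given radius. We will note that the theorem's argument holds pointwise on the event $\mathbb{P}^\star \in \mathbb{B}_{\mathbb{W}}^\rho(\hat{\mathbb{P}})$. If the requirement is interpreted with $\rho$ fixed, or with $N^\star$ evaluated at this $\rho$, the conclusion follows as in the statement.
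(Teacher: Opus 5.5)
Your proposal is correct and follows the paper's proof. You pick $\rho$ as the positive root of $\rho^2+2\hat{\beta}\rho-\epsilon/(2{\alpha^\star}^2)=0$ and require each of $\rho_w^\star(N,\frac{\zeta}{2})$ and $\rho_\mu^\star(N,\frac{\zeta}{2})$ to be at most $\frac{\rho}{2}$. Your explicit case analysis is an unpacked version of the paper's observation that both radii equal $\frac{\rho}{2}$ at $N_w^\star(\frac{\rho}{2},\zeta)$ and $N_\mu^\star(\frac{\rho}{2},\zeta)$ respectively, combined with the fact that both decrease in $N$. Your caveat about the data-dependent $\rho$, which the paper does not address, is settled by applying Lemma~\ref{le:convergence_rate_empirical_measure} with the deterministic radius $\rho^\star(N,\zeta)$. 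On that probability-$(1-\zeta)$ event, $N\geq N^\star(\rho,\zeta)$ forces $\mathbb{W}(\hat{\mathbb{P}},\mathbb{P}^\star)\leq\rho^\star(N,\zeta)\leq\rho$, so the deterministic chain in the proof of Theorem~\ref{th:finite_samples_suboptimality_bound} applies.
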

\begin{proof}
    By inspection of~\eqref{eq:finite_samples_suboptimality_bound} and~\eqref{eq:price_distributional_robustness_epsilon}, we impose the condition $1 + \epsilon = 1+ 2 \bar{\rho} {\alpha^\star}^2$, from which we obtain $\bar{\rho} = \frac{\epsilon}{2{\alpha^\star}^2}$. Recalling the definition of $\bar{\rho}$ in Lemma~\ref{le:worst_case_covariance_bound}, we then compute the maximum Wasserstein radius $\rho$ that guarantees~\eqref{eq:price_distributional_robustness_epsilon} as the unique positive solution of the equation ${\rho}^2 + 2\hat{\beta}\rho - \bar{\rho} = 0$. The proof is then concluded by observing that, by construction, $\rho_{w}^\star(N_w^\star(\frac{\rho}{2}, \zeta), \frac{\zeta}{2}) = \frac{\rho}{2}$ and $\rho_{\mu}^\star(N_\mu^\star(\frac{\rho}{2}, \zeta), \frac{\zeta}{2}) = \frac{\rho}{2}$. Hence, it holds that $\rho^\star(N^\star(\rho, \zeta), \zeta) \leq \rho$ since both $\rho_w^\star(N, \frac{\zeta}{2})$ and $\rho_\mu^\star(N, \frac{\zeta}{2})$ are decreasing functions in $N$.
\end{proof}

The suboptimality analysis of Theorem~\ref{th:finite_samples_suboptimality_bound} reveals that, for sufficiently small distributional uncertainty, the excess true cost of the DR policy $\mathbf{K}^\star_{\text{dr}}$ relative to the oracle controller $\mathbf{K}^\star$ increases at most linearly with the Wasserstein radius $\rho$. In fact, for small $\rho$, we have $\bar{\rho} \sim 2\hat{\beta}\rho + \mathcal{O}(\rho^2)$ and thus $1 + 2\bar{\rho}{\alpha^\star}^2 \sim 1 + 4 \hat{\beta} {\alpha^\star}^2\rho + \mathcal{O}(\rho^2)$;
away from this regime of small distributional ambiguity, the same bound grows at most quadratically in $\rho$. The sample complexity bound of Corollary~\ref{co:price_distributional_robustness_epsilon} further shows that, for sufficiently small $\rho$, the number of samples $N^\star(\rho, \zeta)$ one needs to collect to ensure that $\mathbb{P}^\star \in \mathbb{B}_{\mathbb{W}}^\rho(\hat{\mathbb{P}})$ is dominated by $N_w^\star(\rho, \zeta)$ and scales with $\rho^{-d}$ up to logarithmic factors in $\zeta^{-1}$. For small $\rho$, similar scaling laws hold with $\epsilon$ in place of $\rho$, as the identity $1 + \epsilon = 1+ 2 \bar{\rho} {\alpha^\star}^2$ implies that $\epsilon \sim 4 \hat{\beta} {\alpha^\star}^2 \rho + \mathcal{O}(\rho^2)$.

\begin{remark}
The suboptimality bound~\eqref{eq:finite_samples_suboptimality_bound} given in Theorem~\ref{th:finite_samples_suboptimality_bound}
depends on the realizations of the training samples $\bm{\xi}^k$ through the random variable
$\hat{\beta}=\lambda_{\max}(\hat{\bm{\Sigma}})^{1/2}$. An ex ante suboptimality bound can by derived by observing that, on the high-probability event
$\mathbb P^{\star} \in \mathbb{B}_{\mathbb{W}}^{\rho}(\hat{\mathbb{P}})$, Weyl's inequality and
Lemma~\ref{le:worst_case_covariance_bound} yield $\hat{\beta}^2 \leq \lambda_{\max}(\bm{\Sigma}^\star) + \|\hat{\bm{\Sigma}} -\bm{\Sigma}^\star\|_2 \leq {\beta^\star}^2 + \rho (2 \hat{\beta} + \rho)$,
which implies that $ \hat{\beta} \leq \rho + \sqrt{{\beta^\star}^2 + 2\rho^2}$. Hence, we have that $\bar{\rho} = \rho \leq \rho\left(3\rho+2\sqrt{{\beta^\star}^2+2\rho^2}\right)$.
\end{remark}
\subsection{Numerical implementation}
\label{subsec:implementation}
As the average performance objective~\eqref{eq:finite_horizon_cost_K_P} is nonconvex in $\mathbf{K}$ and the ambiguity set $\mathbb{B}_{\mathbb{W}}^\rho(\hat{\mathbb{P}})$ is infinite dimensional, directly solving the Wasserstein DR optimal control problem~\eqref{eq:dr_optimal_K} over controllers $\mathbf{K} \in \mathcal{K}$ is in general difficult. Our next result addresses these two challenges by instead optimizing over the closed-loop maps~\eqref{eq:sls_closed_loop_responses_definition} and using duality theory to derive a finite-dimensional reformulation of the adversary’s subproblem over zero mean distributions $\mathbb{P} \in \mathbb{B}_{\mathbb{W}}^\rho(\hat{\mathbb{P}}) \subset \mathcal{P}$ in~\eqref{eq:dr_optimal_K}. 

\begin{theorem}
\label{th:dro_sdp}
    For any $\rho > 0$, the Wasserstein DR control policy $\mathbf{K}^\star_{\text{dr}}$ that solves~\eqref{eq:dr_optimal_K} is given by $\mathbf{K}^\star_{\text{dr}} = \bm{\Phi}_{uy}^{\text{dr}} - \bm{\Phi}_{ux}^{\text{dr}} {\bm{\Phi}_{xx}^{\text{dr}}}^{-1} \bm{\Phi}_{xy}^{\text{dr}}$, where $\bm{\Phi}_{xx}^{\text{dr}}$, $\bm{\Phi}_{xy}^{\text{dr}}$, $\bm{\Phi}_{ux}^{\text{dr}}$, and $\bm{\Phi}_{uy}^{\text{dr}}$ are solutions to the following convex optimization problem:
    \begin{subequations}
    \label{eq:wasserstein_dro_sdp_reformulation}
    \begin{align}
        &~\inf ~ \lambda \rho^2 + \frac{1}{N} \sum_{k=1}^N s_k \\
        &\operatorname{subject~to} ~ \lambda \geq 0\,, %
        ~\eqref{eq:sls_achievability}\,, ~ \forall k \in \{1, \dots, N\}\,, \nonumber \\
        &\quad 
        \begin{bmatrix}
            \lambda \mathbf{I} - \mathbf{P} & -\frac{1}{2} \bm{\nu} +\lambda (\bm{\xi}^k - \bar{\bm{\xi}})\\
            \star & s_k + \lambda \|\bm{\xi}^k - \bar{\bm{\xi}}\|^2
        \end{bmatrix} \succeq 0\,, \\
        \label{eq:wasserstein_dro_sdp_reformulation_schur_lmi}
        &\quad 
        \begin{bmatrix}
            \mathbf{I} & \mathbf{S}^{\frac{1}{2}} \begin{bmatrix}
                \bm{\Phi}_{xx} & \bm{\Phi}_{xy}\\
                \bm{\Phi}_{ux} & \bm{\Phi}_{uy}
            \end{bmatrix}\\
            \star & \mathbf{P}
        \end{bmatrix} \succeq 0\,,
    \end{align}
    \end{subequations}    
    where $\star$ denotes entries that can be inferred from symmetry.
\end{theorem}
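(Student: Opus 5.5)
The plan is to combine the system-level reparametrization of $\mathbf{K}$ with a strong duality argument for the inner supremum over zero mean distributions. Then each resulting semi-infinite constraint is turned into a linear matrix inequality.

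\textbf{Step 1: change of variables.} By the achievability characterization~\eqref{eq:sls_achievability}~\cite{wang2019system}, causal $\mathbf{K} \in \mathcal{K}$ are in bijection with causal closed-loop maps satisfying~\eqref{eq:sls_achievability}. The inverse map is $\mathbf{K} = \bm{\Phi}_{uy} - \bm{\Phi}_{ux}\bm{\Phi}_{xx}^{-1}\bm{\Phi}_{xy}$, which can be checked directly from~\eqref{eq:sls_closed_loop_responses_definition}. Here $\bm{\Phi}_{xx}$ is block lower triangular with identity diagonal blocks, hence invertible. Write $\bm{\Phi} = [\bm{\Phi}_x; \bm{\Phi}_u]$ and $\mathbf{M} = \bm{\Phi}^\top \mathbf{S} \bm{\Phi}$. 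By~\eqref{eq:sls_closed_loop_state_input_trajectories} we have $J(\mathbf{K},\mathbb{P}) = \mathbb{E}_{\mathbb{P}}[\bm{\xi}^\top \mathbf{M}\bm{\xi}]$, so~\eqref{eq:dr_optimal_K} becomes a minimization over achievable $\bm{\Phi}$ of $\sup_{\mathbb{P} \in \mathbb{B}_{\mathbb{W}}^{\rho}(\hat{\mathbb{P}})} \mathbb{E}_{\mathbb{P}}[\bm{\xi}^\top \mathbf{M}\bm{\xi}]$.

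\textbf{Step 2: dualize the adversary's problem (main obstacle).} Write the inner supremum as an optimization over couplings $\pi$ with first marginal $\hat{\mathbb{P}}$. Equivalently, optimize over conditional laws $\mathbb{P}_k$ of $\bm{\xi}$ given $\hat{\bm{\xi}} = \hat{\bm{\xi}}^k := \bm{\xi}^k - \bar{\bm{\xi}}$. The constraints are the transport budget $\frac{1}{N}\sum_k \mathbb{E}_{\mathbb{P}_k}\|\bm{\xi} - \hat{\bm{\xi}}^k\|^2 \leq \rho^2$ and the mean constraint $\frac{1}{N}\sum_k \mathbb{E}_{\mathbb{P}_k}[\bm{\xi}] = 0$. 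Introduce $\lambda \geq 0$ for the budget and a free vector $\bm{\nu}$ for the mean constraint. Weak duality then gives the upper bound
\begin{equation*}
\inf_{\lambda \geq 0,\, \bm{\nu}} \; \lambda\rho^2 + \frac{1}{N}\sum_{k=1}^N \sup_{\bm{\xi}} \big[\bm{\xi}^\top \mathbf{M}\bm{\xi} - \bm{\nu}^\top \bm{\xi} - \lambda\|\bm{\xi} - \hat{\bm{\xi}}^k\|^2\big].
\end{equation*}
For equality I would invoke the strong duality theorem for Wasserstein DRO with finitely many additional moment constraints (generalized moment problem / Kuhn et al.~\cite{kuhn2019wasserstein, kuhn2025distributionally}). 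This requires a Slater point. Because $\hat{\mathbb{P}}$ is centered, $\hat{\mathbb{P}}$ itself is zero mean with transport cost $0 < \rho^2$. Moreover, the mean is a linear constraint whose image contains a neighbourhood of $0$: shift all atoms by a small vector while staying inside the budget. This is exactly where $\rho > 0$ is used. Verifying these regularity conditions, and the growth condition on the quadratic loss relative to the quadratic transport cost, is the delicate part. Note that if $\lambda\mathbf{I} - \mathbf{M} \not\succeq 0$ the inner suprema are $+\infty$, so such $\lambda$ never matter.

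\textbf{Step 3: LMIs and convexity.} Introduce epigraph variables $s_k$. The condition $s_k \geq \sup_{\bm{\xi}}[\cdots]$ means that a quadratic in $\bm{\xi}$ is nonnegative everywhere:
\begin{equation*}
s_k - \bm{\xi}^\top \mathbf{M}\bm{\xi} + \bm{\nu}^\top\bm{\xi} + \lambda\|\bm{\xi} - \hat{\bm{\xi}}^k\|^2 \geq 0 \quad \forall \bm{\xi}.
\end{equation*}
Writing this as a quadratic form in $(\bm{\xi},1)$, global nonnegativity is equivalent to positive semidefiniteness of its coefficient matrix. Up to the sign convention for $\bm{\nu}$, this coefficient matrix is the block matrix in the statement with $\mathbf{M}$ in place of $\mathbf{P}$. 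The remaining nonconvexity comes from $\mathbf{M}$ being quadratic in $\bm{\Phi}$. I would replace $\mathbf{M}$ by a slack $\mathbf{P} \succeq \bm{\Phi}^\top \mathbf{S}\bm{\Phi}$, which by a Schur complement is exactly~\eqref{eq:wasserstein_dro_sdp_reformulation_schur_lmi}. This relaxation is exact: the first LMI only becomes harder to satisfy as $\mathbf{P}$ grows in the Loewner order, so $\mathbf{P} = \mathbf{M}$ is optimal. All constraints are then jointly linear matrix inequalities in $(\bm{\Phi}, \mathbf{P}, \lambda, \bm{\nu}, s)$, so the problem is convex. Recovering $\mathbf{K}^\star_{\text{dr}}$ from the optimal $\bm{\Phi}^{\text{dr}}$ via Step 1 concludes the argument.
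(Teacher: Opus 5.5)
Your proposal is correct and follows essentially the same route as the paper. The paper likewise dualizes the transport-budget and zero-mean constraints of the coupling formulation via Shapiro's conic duality, verifying the generalized Slater condition with the same atom-shifting plan $\pi=\sum_k m_k\delta_{(\hat{\bm{\xi}}^k,\hat{\bm{\xi}}^k+\mathbf{a})}$, which is where $\rho>0$ and the centering of $\hat{\mathbb{P}}$ enter. It then converts each semi-infinite constraint into an LMI and relaxes $\mathbf{P}=\bm{\Phi}^\top\mathbf{S}\bm{\Phi}$ to $\mathbf{P}\succeq\bm{\Phi}^\top\mathbf{S}\bm{\Phi}$ via a Schur complement, with exactness by Loewner monotonicity; the only cosmetic difference is that you build the marginal constraints into the domain through conditional laws, whereas the paper also dualizes them and obtains the $s_k$ as multipliers rather than epigraph variables.
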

The proof of Theorem~\ref{th:dro_sdp} is reported in the Appendix.

\section{Experiments}
In this section, we present numerical simulations to validate our theoretical bounds on the price of distributional robustness in Theorem~\ref{th:finite_samples_suboptimality_bound}. For our experiments, we consider the open-loop unstable system described by the state-space equations
\begin{equation*}
    x_{t+1} = \begin{bmatrix}
        1 & 1\\ 0 & 1
    \end{bmatrix} x_t + \begin{bmatrix}
        0 \\ 1
    \end{bmatrix}u_t + w_t\,, \quad y_t = \begin{bmatrix}
        1 & 0
    \end{bmatrix} x_t + v_t\,.
\end{equation*}
We fix a control horizon of length $T = 5$ and assume that the unknown true law $\mathbb{P}^\star$ of $\bm{\xi} = (x_0, w_0, \dots, w_3, v_0, \dots, v_4)$ is a standard Gaussian distribution. Assuming full knowledge of $\mathbb{P}^\star$, we first compute the linear quadratic Gaussian (LQG) controller $\mathbf{K}^\star$ that minimizes the expected cost~\eqref{eq:finite_horizon_cost_K_P}, where $\mathbf{R}$ is an identity matrix of appropriate dimensions and $\mathbf{Q} \succ 0$ is randomly selected. Then, we construct a training dataset $\mathcal{D}$ by drawing $N_{\max} = 5 \cdot 10^4$ independent samples $\bm{\xi}^k \sim \mathbb{P}^\star$, and repeatedly solve the semidefinite program~\eqref{eq:wasserstein_dro_sdp_reformulation} for different Wasserstein radii $\rho \in [10^{-4}, 2]$ and training dataset sizes $N \in [50, N_{\max}]$ to obtain the corresponding DR optimal control policy $\mathbf{K}_{\text{dr}}^\star(\rho, N)$.\footnote{The source code that reproduces our numerical examples is available at \href{https://github.com/andrea-martin/price-dro.git}{github.com/andrea-martin/price-dro.git}.} We evaluate the true average cost incurred by each controller using~\eqref{eq:finite_horizon_cost_K_P_Frobenius_norm} with $\mathbb{P} = \mathbb{P}^\star$, and compute their normalized suboptimality $\varsigma(\rho, N)$ relative to $\mathbf{K}^\star$ as 
\begin{equation*}
    \varsigma(\rho, N) = \frac{J(\mathbf{K}_{\text{dr}}^{\star}(\rho, N), \mathbb{P}^\star) -J(\mathbf{K}^\star, \mathbb{P}^\star)}{J(\mathbf{K}^\star, \mathbb{P}^\star)}\,.
\end{equation*}
We collect our results in~Figure~\ref{fig:average_subopt}, which reports a comparison between the empirical suboptimality $\varsigma(\rho, N)$ and our theoretical upper bound~\eqref{eq:finite_samples_suboptimality_bound}.
\begin{figure}[htb]
\centering
\includegraphics[width=\columnwidth]{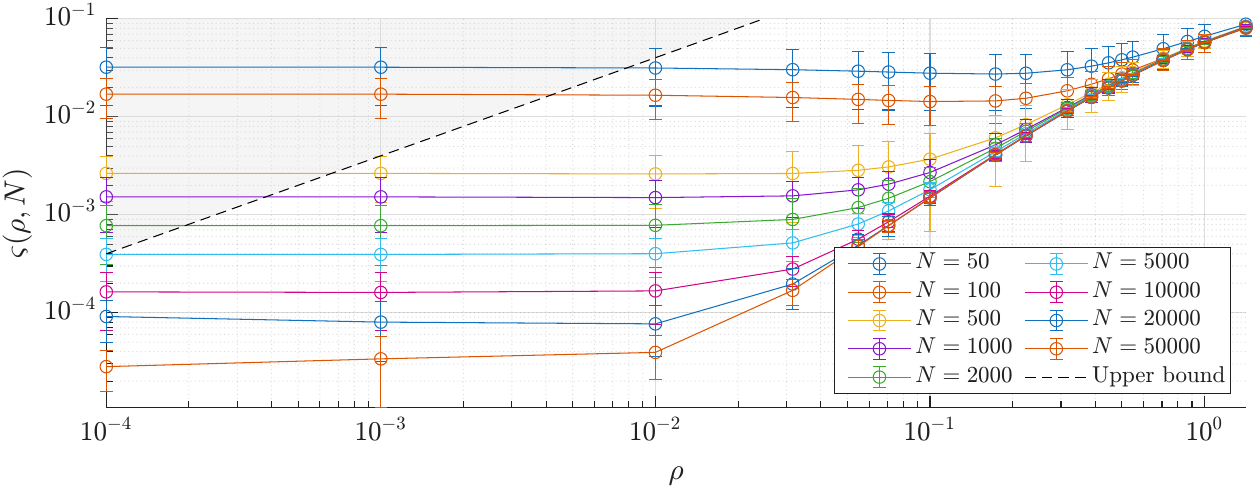}
\caption{Empirical suboptimality $\varsigma(\rho, N)$ of the DR optimal policy $\mathbf{K}^\star_{\text{dr}}$ relative to the oracle controller $\mathbf{K}^\star$ computed with foreknowledge of $\mathbb{P}^\star$, as a function of the Wasserstein radius $\rho$ and the number $N$ of training samples $\bm{\xi}^k \sim \mathbb{P}^\star$. Error bars indicate one standard deviation around the mean values, computed over $10$ independent draws of $N_{\max}$ training samples $\bm{\xi}^k \sim \mathbb{P}^\star$.
}
\label{fig:average_subopt}
\end{figure}
While this experiment does not meet all conditions of Theorem~\ref{th:finite_samples_suboptimality_bound}—Assumption~\ref{ass:true_distribution_light_tailed} imposes a decay rate for the tails of the true law $\mathbb{P}^\star$ that is faster than that of a standard Gaussian distribution, and the number of samples $N^\star(\rho, \zeta)$ required to ensure that $\mathbb{P}^\star \in \mathbb{B}_{\mathbb{W}}^\rho(\hat{\mathbb{P}})$ in Corollary~\ref{co:price_distributional_robustness_epsilon} is prohibitively large since $\bm{\xi} \in \mathbb{R}^{14}$—Figure~\ref{fig:average_subopt} highlights that our theoretical bound nevertheless captures the growth rate of $\varsigma(\rho, N)$. In fact, as $N$ increases, the dotted line representing~\eqref{eq:finite_samples_suboptimality_bound} becomes nearly parallel to the empirical suboptimality for $\rho$ approximately greater than $10^{-2}$. Furthermore, Figure~\ref{fig:average_subopt} showcases that, for $N = 50$ and $N = 100$, the empirical suboptimality $\varsigma(\rho, N)$ initially decreases with $\rho$, validating the effectiveness of DR optimization in mitigating the optimizer's curse when only a limited number of training samples $\bm{\xi}^k \sim \mathbb{P}^\star$ are available for control design. For larger values of $N$, instead, the empirical distribution $\hat{\mathbb{P}}$ better approximates the true law $\mathbb{P}^\star$ and the price of distributional robustness increases monotonically with $\rho$ as predicted by~\eqref{eq:finite_samples_suboptimality_bound}.

At the same time, the fact that our theoretical bound~\eqref{eq:finite_samples_suboptimality_bound} is 
numerically valid even when $N$ does not exceeds $N^\star(\rho, \zeta)$ in Corollary~\ref{co:price_distributional_robustness_epsilon} suggests that the bound~\eqref{eq:number_samples_given_rho_wasserstein} can become loose when $d$ is large, due to the curse of dimensionality. This claim is supported by Figure~\ref{fig:lb_distance_computational_time}, which displays a lower bound to $\mathbb{W}(\hat{\mathbb{P}}, \mathbb{P}^\star)$, showing that in our experiment the  bound~\eqref{eq:finite_samples_suboptimality_bound} often holds true even when $\mathbb{P}^\star \not \in \mathbb{B}_{\mathbb{W}}^\rho(\hat{\mathbb{P}})$. In particular, the blue line in Figure~\ref{fig:lb_distance_computational_time} represents a numerical approximation of the sliced Wasserstein distance $\mathbb{SW}(\hat{\mathbb{P}}, \mathbb{P}^\star)$, formally defined as~\cite{kolouri2019generalized}%
\begin{equation*}
    \mathbb{SW}(\hat{\mathbb{P}}, \mathbb{P}^\star) = \left( \int_{\mathbb{S}^{d-1}} \mathbb{W}(\hat{\mathbb{P}}_{\bm{\psi}}, \mathbb{P}^\star_{\bm{\psi}})^2 
    \text{d}\sigma(\bm{\psi}) \right)^{\frac{1}{2}} \leq \mathbb{W}(\hat{\mathbb{P}}, \mathbb{P}^\star)\,,
\end{equation*}
where, for any $\bm{\psi} \in \mathbb{S}^{d-1}$ and $\mathbb{P} \in \mathcal{P}$, $\mathbb{P}_{\bm{\psi}}$ denotes the one-dimensional distribution of the projection $\langle \bm{\psi}, \bm{\xi} \rangle$, $\bm{\xi} \sim \mathbb{P}$, and $\sigma$ is the uniform probability measure on the unit sphere $\mathbb{S}^{d-1}$.
\begin{figure}[htb]
\centering
\includegraphics[width=\columnwidth]{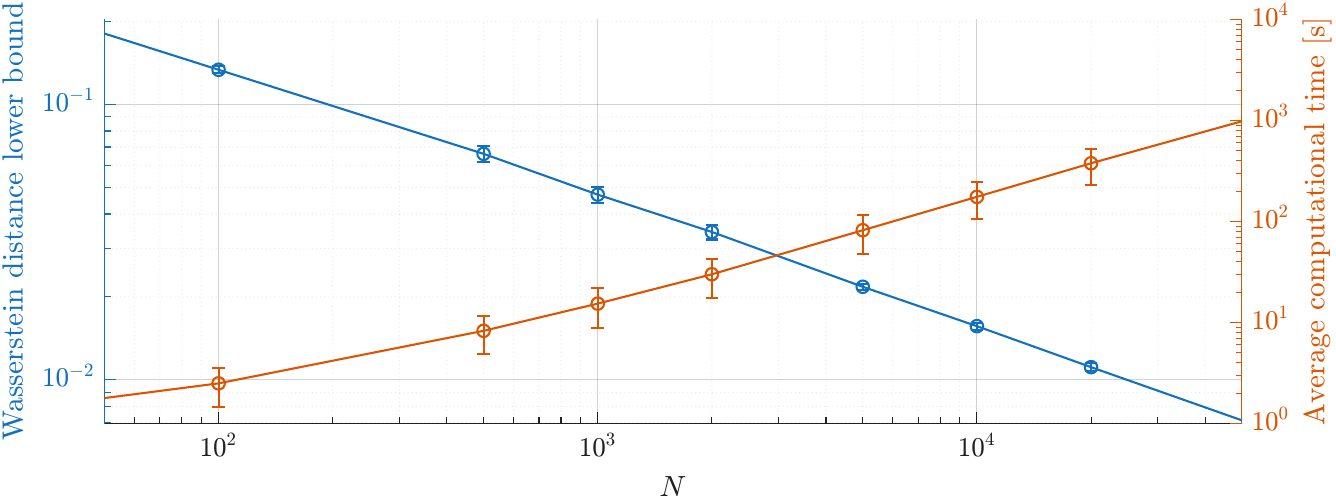}
\caption{Sliced Wasserstein distance $\mathbb{SW}(\hat{\mathbb{P}}, \mathbb{P}^\star)$ (on the left $y$-axis) and average computational time required to evaluate the DR optimal policy $\mathbf{K}^\star_{\text{dr}}$ through the semidefinite program~\eqref{eq:wasserstein_dro_sdp_reformulation} (on the right $y$-axis), as a function of the number $N$ of samples $\bm{\xi}^k \sim \mathbb{P}^\star$ in the training dataset $\mathcal{D}$. Error bars indicate one standard deviation around the mean values, computed over $10$ independent draws of $N_{\max}$ training samples $\bm{\xi}^k \sim \mathbb{P}^\star$.
}
\label{fig:lb_distance_computational_time}
\end{figure}
Last, we highlight that Figure~\ref{fig:average_subopt} suggests that progressively increasing the number of training samples $\bm{\xi}^k \sim \mathcal{D}$ comes with diminishing returns, despite the considerably higher computational cost required to solve~\eqref{eq:wasserstein_dro_sdp_reformulation} through convex programming when $N$ is large, as shown by the orange curve in Figure~\ref{fig:lb_distance_computational_time}.\footnote{All optimization problems have been solved using MOSEK~\cite{aps2019mosek} on a standard laptop computer with a 2.3 GHz Intel Core i9 CPU.} For all $\rho \in [10^{-4}, 2]$, we in fact observe that the relative improvement in the empirical suboptimality $\varsigma(\rho, N)$ decreases as $N$ grows up where, for $\rho$ sufficiently large, the empirical suboptimality of all DR controllers $\mathbf{K}_{\text{dr}}^\star(\rho, N)$ converges to approximately the same value, irrespective of $N$.  

\section{Conclusion}
In this paper, we studied the price of distributional robustness in data-driven Wasserstein DR linear quadratic control. Leveraging a novel outer approximation of data-driven Wasserstein ambiguity sets, constructed by ignoring higher-order moment information, we showed that the suboptimality of the DR controller relative to the oracle controller with knowledge of the true uncertainty distribution grows at most linearly with the Wasserstein radius for sufficiently small distributional ambiguity, and at most quadratically outside this local regime. We further established a sample complexity bound ensuring that the DR solution incurs a suboptimality no higher than a prescribed level with high probability. Finally, we showed that the resulting Wasserstein DR control problem admits a tractable reformulation as a semidefinite program, and we presented numerical experiments illustrating the qualitative behavior predicted by our theoretical bounds. Inspired by recent results in~\cite{gao2023finite}, an interesting direction for future work is to derive less conservative finite-sample guarantees that mitigate the dependence of our concentration bounds on the dimensionality of the uncertainty.

\section*{Acknowledgment}
The authors would like to thank Prof. Luca Furieri and Prof. Yang Zheng for the fruitful discussions.

\bibliographystyle{IEEEtran}
\bibliography{references}

\appendix

\emph{Proof of Theorem~\ref{th:dro_sdp}}: Let $\ell(\bm{\xi})$ denote the loss resulting from expressing the integrand $\mathbf{x}^\top \mathbf{Q} \mathbf{x} + \mathbf{u}^\top \mathbf{R} \mathbf{u}$ of $J(\mathbf{K}, \mathbb{P})$ in~\eqref{eq:finite_horizon_cost_K_P} as a function of $\bm{\xi}$. The proof is articulated in two steps. By reformulating the adversary’s subproblem $\sup_{\mathbb{P} \in \mathbb{B}_{\mathbb{W}}^\rho(\hat{\mathbb{P}})} ~ \mathbb{E}_{\mathbb{P}}[\ell(\bm{\xi})]$ in~\eqref{eq:dr_optimal_K} as a conic linear program over nonnegative coupling measures and using~\cite[Proposition~3.4]{shapiro2001duality}, we first show that 
\begin{equation}
\label{eq:dualization_zero_mean_constraint}
    \sup_{\mathbb{P} \in \mathbb{B}_{\mathbb{W}}^\rho(\hat{\mathbb{P}})} ~ \mathbb{E}_{\mathbb{P}}[\ell(\bm{\xi})] = \inf_{\bm{\nu} \in \mathbb{R}^d} ~ \sup_{\mathbb{P} : \mathbb{W}(\hat{\mathbb{P}}, \mathbb{P}) \leq \rho} ~ \mathbb{E}_{\mathbb{P}}[\ell(\bm{\xi}) - \bm{\nu}^\top \bm{\xi}]\,;
\end{equation}
this step follows a proof strategy similar to~\cite[Theorem~B.1]{li2022tikhonov}, which considers ambiguity sets with conditional martingale constraints rather than an unconditional zero mean constraint. Since the state and input trajectories $\mathbf{x}$ and $\mathbf{u}$ depend linearly on $\bm{\xi}$ by~\eqref{eq:sls_closed_loop_state_input_trajectories}, the tilted loss $\ell(\bm{\xi}) - \bm{\nu}^\top \bm{\xi}$ in~\eqref{eq:dualization_zero_mean_constraint} remains quadratic in $\bm{\xi}$, and we then conclude the proof by combining the system level parametrization with standard duality results for data-driven Wasserstein DRO, similar to~\cite{brouillon2025distributionally}.

We begin by proving~\eqref{eq:dualization_zero_mean_constraint}. Let $\hat{\Xi} = \operatorname{supp}(\hat{\mathbb{P}})$ denote the support of the centered empirical distribution $\hat{\mathbb{P}}$. Since the first marginal of every coupling $\pi \in \Pi(\hat{\mathbb{P}}, \mathbb{P})$ equals $\hat{\mathbb{P}}$, every such coupling is supported on $\hat{\Xi} \times \mathbb{R}^d$. Thus, without loss of generality, we may restrict to non-negative Borel measures supported on this set. By definition of $\mathbb{B}_{\mathbb{W}}^\rho(\hat{\mathbb{P}})$ in~\eqref{eq:wasserstein_ambiguity_set}, we equivalently rewrite the adversary's subproblem in~\eqref{eq:dr_optimal_K} as an optimization problem over couplings as follows\footnote{For notational simplicity,~\eqref{eq:adversary_subproblem_couplings_marginal_constraint} implicitly assumes that all samples $\bm{\xi}^k \in \mathcal{D}$ are distinct; a similar reasoning holds also without this assumption by merging identical samples and assigning them their aggregate empirical weights.}
\begin{subequations}
\label{eq:adversary_subproblem_couplings}
    \begin{align}
        \label{eq:adversary_subproblem_couplings_objective}
        &~\sup_{\pi \in \mathcal{M}_+(\hat{\Xi} \times \mathbb{R}^d)} ~ \int_{\mathbb{R}^d \times \mathbb{R}^d} \ell(\bm{\xi}) \pi(\text{d}\hat{\bm{\xi}}, \text{d}\bm{\xi}) \\
        &\operatorname{subject~to} ~ \forall k \in \{1, \dots, N\}\,, \nonumber\\
        \label{eq:adversary_subproblem_couplings_marginal_constraint}
        & \quad \int_{\mathbb{R}^d \times \mathbb{R}^d} \mathbbm{1}_{\{\bm{\xi}^k - \bar{\bm{\xi}}\} \times \mathbb{R}^d}(\hat{\bm{\xi}}, \bm{\xi}) \pi(\text{d}\hat{\bm{\xi}}, \text{d}\bm{\xi}) = \frac{1}{N}\,,\\
        \label{eq:adversary_subproblem_couplings_wasserstein_distance_constraint}
        & \quad \int_{\mathbb{R}^d \times \mathbb{R}^d} \| \hat{\bm{\xi}} - \bm{\xi}\|_2^2 \pi(\text{d}\hat{\bm{\xi}}, \text{d}\bm{\xi}) \leq \rho^2\,,\\
        \label{eq:adversary_subproblem_couplings_zero_mean_constraint}
        & \quad \int_{\mathbb{R}^d \times \mathbb{R}^d} \bm{\xi}  \pi(\text{d}\hat{\bm{\xi}}, \text{d}\bm{\xi}) = 0\,,
\end{align}
\end{subequations}
where $\mathcal{M}_+(\hat{\Xi} \times \mathbb{R}^d)$ represents the set of all non-negative Borel measures supported on $\hat{\Xi} \times \mathbb{R}^d$, $\mathbbm{1}_\mathcal{S}$ is the indicator function of the set $\mathcal{S}$, the  constraint~\eqref{eq:adversary_subproblem_couplings_marginal_constraint} for every $k \in \{1, \dots, N\}$ ensures that $\pi \in \mathcal{M}_+(\hat{\Xi} \times \mathbb{R}^d)$ is a probability measure and that the marginal of $\hat{\bm{\xi}}$ under $\pi$ equals $\hat{\mathbb{P}}$, whereas~\eqref{eq:adversary_subproblem_couplings_wasserstein_distance_constraint} and~\eqref{eq:adversary_subproblem_couplings_zero_mean_constraint} correspond to the Wasserstein distance constraint $\mathbb{W}(\hat{\mathbb{P}}, \mathbb{P}) \leq \rho$ and the zero mean constraint $\mathbb{E}_{\mathbb{P}}[\bm{\xi}] = 0$ in~\eqref{eq:wasserstein_ambiguity_set}, respectively. As $\mathcal{M}_+(\hat{\Xi} \times \mathbb{R}^d)$ is a convex cone and all constraints regarding $\pi$ are linear, the conic linear program~\eqref{eq:adversary_subproblem_couplings} can be rewritten in the standard primal problem form of~\cite[Equation~3.2]{shapiro2001duality} as $\sup_{\pi \in \mathcal{M}_+(\hat{\Xi} \times \mathbb{R}^d)} ~ \langle \ell, \pi \rangle \quad \operatorname{subject~to} ~ A\pi - b \in K$,
where $b = (\frac{1}{N}, \dots, \frac{1}{N}, \rho^2, 0)$, $K = \{0\}^N \times \mathbb{R}_{\leq 0} \times \{0\}^d$, and $A$ is the linear mapping defined through the left hand side of the constraints in~\eqref{eq:adversary_subproblem_couplings}, that is,
\begin{equation*}
    A : \pi \mapsto 
    \begin{bmatrix}
        \int_{\mathbb{R}^d \times \mathbb{R}^d} \mathbbm{1}_{\{\bm{\xi}^1 - \bar{\bm{\xi}}\} \times \mathbb{R}^d}(\hat{\bm{\xi}}, \bm{\xi}) \pi(\text{d}\hat{\bm{\xi}}, \text{d}\bm{\xi})\\
        \vdots\\
        \int_{\mathbb{R}^d \times \mathbb{R}^d} \mathbbm{1}_{\{\bm{\xi}^N - \bar{\bm{\xi}}\} \times \mathbb{R}^d}(\hat{\bm{\xi}}, \bm{\xi}) \pi(\text{d}\hat{\bm{\xi}}, \text{d}\bm{\xi})\\
        \int_{\mathbb{R}^d \times \mathbb{R}^d} \| \hat{\bm{\xi}} - \bm{\xi}\|_2^2 \pi(\text{d}\hat{\bm{\xi}}, \text{d}\bm{\xi})\\
        \int_{\mathbb{R}^d \times \mathbb{R}^d} \bm{\xi} \pi(\text{d}\hat{\bm{\xi}}, \text{d}\bm{\xi})
    \end{bmatrix} =: 
    \begin{bmatrix}
        m_1\\
        \vdots\\
        m_N\\
        r\\
        \mathbf{q}
    \end{bmatrix}\,.
\end{equation*}
We now verify the generalized Slater condition in~\cite[Equation~3.12]{shapiro2001duality}. Letting $\mathcal{M}_+ = \mathcal{M}_+(\hat{\Xi} \times \mathbb{R}^d)$, one can show that
\begin{align*}
    &A(\mathcal{M}_+) - K = (\{0\}^N \times [0, +\infty) \times \{0\}^d) \\
    &\quad\cup \left\{ (\mathbf{m}, r, \mathbf{q}) :
    \!\! 
    \begin{array}{l}
        \mathbf{m} \in [0, +\infty)^N, 
        M = \sum_{k = 1}^N m_k >0,\\
        \displaystyle
        r \geq \frac{ \| \mathbf{q} - \sum_{k=1}^N m_k(\bm{\xi}^k - \bar{\bm{\xi}}) \|_2^2
        }{M}
    \end{array}
    \right\}\,.
\end{align*}
In fact, for any $\pi \in \mathcal{M}_+$, we have that $\mathbf{q} -\sum_{k = 1}^N m_k(\bm{\xi}^k - \bar{\bm{\xi}}) = \int_{\mathbb{R}^d \times \mathbb{R}^d} (\bm{\xi}-\hat{\bm{\xi}}) \pi(\text{d}\hat{\bm{\xi}}, \text{d}\bm{\xi})$ and hence, by Jensen's inequality applied to the probability measure $\pi/M$, that
\begin{equation}
\label{eq:generalized_slater_condition_check}
    \norm{\mathbf{q} -\sum_{k = 1}^N m_k(\bm{\xi}^k - \bar{\bm{\xi}})}_2^2 \leq  M r\,;
\end{equation}
conversely, for any $\mathbf{m} \in \mathbb{R}^N$ with $m_k \geq 0$ and $M > 0$ and any $\mathbf{q} \in \mathbb{R}^d$, equality in~\eqref{eq:generalized_slater_condition_check} is attained by the transport plan $\pi = \sum_{k=1}^N m_k \delta_{(\bm{\xi}^k-\bar{\bm{\xi}},\,\bm{\xi}^k-\bar{\bm{\xi}}+\mathbf{a})}$, with $\mathbf{a} = \frac{ \mathbf{q} - \sum_{k=1}^N m_k(\bm{\xi}^k - \bar{\bm{\xi}})}{M}$. By definition of $K$, subtracting $K$ leaves $\mathbf{m}$ and $\mathbf{q}$ unchanged and allows the transportation-cost component to increase arbitrarily. This proves the claimed characterization of $A(\mathcal M_+)-K$.

Since $\frac1N\sum_{k=1}^N (\bm{\xi}^k-\bar{\bm{\xi}}) = 0$ as $\hat{\mathbb{P}}$ is centered and since all the first $N$ components of $b$ are strictly positive, the generalized Slater condition $b \in \operatorname{int}(A(\mathcal{M}_+)- K)$, where $\operatorname{int}(\mathcal{S})$ denotes the interior of the set $\mathcal{S}$, then holds for any $\rho > 0$ and $N \in \mathbb{N}$. Therefore, the conic linear program~\eqref{eq:adversary_subproblem_couplings} admits a strong dual by~\cite[Proposition~3.4]{shapiro2001duality}. In other words,
\begin{equation*}
    \sup_{\pi \in \mathcal{M}_+} ~ \inf_{\mathbf{s}, \lambda \geq 0, \bm{\nu}} ~ \mathcal{L}(\pi, \mathbf{s}, \lambda, \bm{\nu}) =  \inf_{\mathbf{s}, \lambda \geq 0, \bm{\nu}} ~ \sup_{\pi \in \mathcal{M}_+} ~ \mathcal{L}(\pi, \mathbf{s}, \lambda, \bm{\nu})\,,
\end{equation*}
where, for any $\mathbf{s} = (s_1, \dots, s_N) \in \mathbb{R}^N$, $\lambda \geq 0$, and $\bm{\nu} \in \mathbb{R}^d$, the Lagrangian function $\mathcal{L}(\pi, \mathbf{s}, \lambda, \bm{\nu})$ is given by
\begin{align}
\label{eq:lagrangian_conic_linear_program}
    \frac{1}{N} \sum_{k = 1}^N s_k + \lambda \rho^2 + &\int_{\mathbb{R}^d \times \mathbb{R}^d} \ell(\bm{\xi}) - \bm{\nu}^\top \bm{\xi} - \lambda \| \hat{\bm{\xi}} - \bm{\xi}\|_2^2\\ 
    &\quad - \sum_{k=1}^N s_k \mathbbm{1}_{\{\bm{\xi}^k - \bar{\bm{\xi}}\} \times \mathbb{R}^d} (\hat{\bm{\xi}}, \bm{\xi})\pi(\text{d}\hat{\bm{\xi}}, \text{d}\bm{\xi})\,. \nonumber
\end{align}
By~\eqref{eq:lagrangian_conic_linear_program}, we then observe that $\sup_{\pi \in \mathcal{M}_+} ~ \mathcal{L}(\pi, \mathbf{s}, \lambda, \bm{\nu})$ equals 
\begin{equation}
\label{eq:objective_dual_function_conic_linear_program}
     \frac{1}{N} \sum_{k = 1}^N s_k + \lambda \rho^2\,,
\end{equation}
if $s_k \geq \sup_{\bm{\xi} \in \mathbb{R}^d} ~ \ell(\bm{\xi}) - \bm{\nu}^\top \bm{\xi} - \lambda \| \bm{\xi}^k - \bar{\bm{\xi}}- \bm{\xi}\|_2^2$ for all $k \in \{1, \dots, N\}$, and $+\infty$ otherwise. Since~\eqref{eq:objective_dual_function_conic_linear_program} is increasing in $s_k$, we obtain that $\inf_{\mathbf{s}, \lambda \geq 0, \bm{\nu}} ~ \sup_{\pi \in \mathcal{M}_+} ~ \mathcal{L}(\pi, \mathbf{s}, \lambda, \bm{\nu})$ equals
\begin{equation*}
    \inf_{\lambda \geq 0, \bm{\nu}} ~ \lambda \rho^2 +\frac{1}{N} \sum_{k = 1}^N \sup_{\bm{\xi} \in \mathbb{R}^d} ~ \ell(\bm{\xi}) - \bm{\nu}^\top \bm{\xi} - \lambda \| \bm{\xi}^k - \bar{\bm{\xi}} - \bm{\xi}\|_2^2\,,
\end{equation*}
which corresponds to DR subproblem on the right-hand side of~\eqref{eq:dualization_zero_mean_constraint} by~\cite[Theorem~4.18]{kuhn2025distributionally}. 

We now leverage the quadratic structure of the tilted loss $\ell(\bm{\xi}) - \bm{\nu}^\top \bm{\xi}$ in~\eqref{eq:dualization_zero_mean_constraint} to derive the explicit finite-dimensional reformulation in~\eqref{eq:wasserstein_dro_sdp_reformulation}. Using~\eqref{eq:finite_horizon_cost_K_P} and~\eqref{eq:sls_closed_loop_state_input_trajectories}, we observe that, for any $\mathbf{K} \in \mathcal{K}$ or, equivalently, for any achievable $\bm{\Phi}_x$ and $\bm{\Phi}_u$
\begin{align*}
    \ell(\bm{\xi}) - \bm{\nu}^\top \bm{\xi} = \bm{\xi}^\top  \begin{bmatrix}
        \bm{\Phi}_x^\top & \bm{\Phi}_u^\top
    \end{bmatrix} \mathbf{S} \begin{bmatrix}
        \bm{\Phi}_x \\ \bm{\Phi}_u
    \end{bmatrix} \bm{\xi} - \bm{\nu}^\top \bm{\xi}\,.%
\end{align*}
Hence, by~\cite[Theorem~11]{kuhn2019wasserstein}, we can rewrite the adversary's subproblem $\sup_{\mathbb{P} : \mathbb{W}(\hat{\mathbb{P}}, \mathbb{P}) \leq \rho} ~ \mathbb{E}_{\mathbb{P}}[\ell(\bm{\xi}) - \bm{\nu}^\top \bm{\xi}]$ as
\begin{subequations}
\label{eq:wasserstein_dro_sdp_reformulation_intermediate}
    \begin{align}
        \label{eq:wasserstein_dro_sdp_reformulation_intermediate_objective}
        &~\inf ~ \lambda \rho^2 + \frac{1}{N} \sum_{k=1}^N s_k\\
        &\operatorname{subject~to} ~ \lambda \geq 0\,, %
        ~\eqref{eq:sls_achievability}\,, ~ \forall k \in \{1, \dots, N\}\,, \nonumber \\
        &\quad 
        \begin{bmatrix}
            \lambda \mathbf{I} - \mathbf{P} & -\frac{1}{2} \bm{\nu} +\lambda (\bm{\xi}^k - \bar{\bm{\xi}}) \\
            \star & s_k + \lambda \|\bm{\xi}^k - \bar{\bm{\xi}}\|^2_2
        \end{bmatrix} \succeq 0\,, \label{eq:wasserstein_dro_sdp_reformulation_intermediate_lmi_dro}\\
        \label{eq:wasserstein_dro_sdp_reformulation_intermediate_quadratic_equality}
        &\quad \mathbf{P} = \begin{bmatrix} \bm{\Phi}_x^\top & \bm{\Phi}_u^\top \end{bmatrix} \mathbf{S} \begin{bmatrix} \bm{\Phi}_x \\ \bm{\Phi}_u \end{bmatrix}\,,
    \end{align}
\end{subequations}
where~\eqref{eq:sls_achievability} ensures achievability of the closed-loop responses $\bm{\Phi}_x$ and $\bm{\Phi}_u$ in~\eqref{eq:wasserstein_dro_sdp_reformulation_intermediate_quadratic_equality}. We proceed to show that the quadratic equality constraint~\eqref{eq:wasserstein_dro_sdp_reformulation_intermediate_quadratic_equality} can be relaxed to 
\begin{equation}
\label{eq:wasserstein_dro_sdp_reformulation_intermediate_quadratic_inequality_relaxed} 
    \mathbf{P} \succeq \begin{bmatrix} \bm{\Phi}_x^\top & \bm{\Phi}_u^\top \end{bmatrix} \mathbf{S} \begin{bmatrix} \bm{\Phi}_x \\ \bm{\Phi}_u \end{bmatrix}\,,
\end{equation}
without loss of generality. To prove this, let us assume that the tuple $(\lambda^\star, s_1^\star, \dots, s_N^\star, \bar{\mathbf{P}}^\star, \bm{\Phi}_x^\star, \bm{\Phi}_u^\star)$ is a solution to~\eqref{eq:wasserstein_dro_sdp_reformulation_intermediate} with~\eqref{eq:wasserstein_dro_sdp_reformulation_intermediate_quadratic_inequality_relaxed} in place of~\eqref{eq:wasserstein_dro_sdp_reformulation_intermediate_quadratic_equality}. In particular, note that this tuple may not be feasible for the original problem~\eqref{eq:wasserstein_dro_sdp_reformulation_intermediate}, and hence its objective $\lambda^\star \rho^2 + \frac{1}{N} \sum_{k=1}^N s_k^\star$ represents a lower bound to the optimal value of~\eqref{eq:wasserstein_dro_sdp_reformulation_intermediate}. We now argue that the tuple $(\lambda^\star, s_1^\star, \dots, s_N^\star, \begin{bmatrix} {\bm{\Phi}_x^\star}^\top & {\bm{\Phi}_u^\star}^\top \end{bmatrix} \mathbf{S} \begin{bmatrix} \bm{\Phi}_x^\star \\ \bm{\Phi}_u^\star \end{bmatrix}, \bm{\Phi}_x^\star, \bm{\Phi}_u^\star)$ achieves the same objective and is feasible for~\eqref{eq:wasserstein_dro_sdp_reformulation_intermediate}. In fact, the objective~\eqref{eq:wasserstein_dro_sdp_reformulation_intermediate_objective} is independent of $\mathbf{P}$ and the proposed candidate solution satisfies~\eqref{eq:wasserstein_dro_sdp_reformulation_intermediate_quadratic_equality} by construction. Moreover, feasibility of~\eqref{eq:wasserstein_dro_sdp_reformulation_intermediate_lmi_dro} with $\mathbf{P} = \bar{\mathbf{P}}^\star$ implies that~\eqref{eq:wasserstein_dro_sdp_reformulation_intermediate_lmi_dro} also holds when selecting $\mathbf{P} = \begin{bmatrix} {\bm{\Phi}_x^\star}^\top & {\bm{\Phi}_u^\star}^\top \end{bmatrix} \mathbf{S} \begin{bmatrix} \bm{\Phi}_x^\star \\ \bm{\Phi}_u^\star \end{bmatrix}$ since $\bar{\mathbf{P}}^\star \succeq \begin{bmatrix} {\bm{\Phi}_x^\star}^\top & {\bm{\Phi}_u^\star}^\top \end{bmatrix} \mathbf{S} \begin{bmatrix} \bm{\Phi}_x^\star \\ \bm{\Phi}_u^\star \end{bmatrix}$ by~\eqref{eq:wasserstein_dro_sdp_reformulation_intermediate_quadratic_inequality_relaxed}. Hence, replacing~\eqref{eq:wasserstein_dro_sdp_reformulation_intermediate_quadratic_equality} with~\eqref{eq:wasserstein_dro_sdp_reformulation_intermediate_quadratic_inequality_relaxed} does not affect the resulting optimal closed-loop map $\bm{\Phi}_x^\star$ and $\bm{\Phi}_u^\star$. The linear matrix inequality constraint~\eqref{eq:wasserstein_dro_sdp_reformulation_schur_lmi} then follows by applying the Schur complement to~\eqref{eq:wasserstein_dro_sdp_reformulation_intermediate_quadratic_inequality_relaxed}. This yields the semidefinite optimization problem~\eqref{eq:wasserstein_dro_sdp_reformulation}.\qed

\end{document}